\newif\ifappx
\let\origIEEEPARstart\IEEEPARstart
\renewcommand{\IEEEPARstart}[3][1.1]{%
  \def\@IEEEPARstartDROPDEPTH{#1\baselineskip}%
  \origIEEEPARstart{#2}{#3}%
}
\newtheorem{mydef}{Definition}
\newtheorem{mycoro}{Corollary}
\newtheorem{theorem}{Theorem}
\newtheorem{proposition}{Proposition}
\newtheorem{myremark}{Remark}
\begin{document}
\title{State-Space Network Topology Identification from Partial Observations}

\author{Mario Coutino, \emph{Student Member, IEEE}, Elvin Isufi, \emph{Member, IEEE}, Takanori Maehara \\and Geert Leus, \emph{Fellow, IEEE}\thanks{M. Coutino and G. Leus are with the faculty of Electrical Engineering, Mathematics and Computer Science, Delft University of Technology, Delft, The
Netherlands. E. Isufi is with the Department of Electrical and Systems Engineering, University of Pennsylvania, Philadelphia, USA. T. Maehara is with AIP RIKEN, Tokyo, Japan. This research is supported in part by the ASPIRE project (project 14926 within the STW OTP programme), financed by the Netherlands Organization for Scientific Research (NWO) and in part by NSF CCF 1717120, ARO W911NF1710438, ARL DCIST CRA W911NF-17-2-0181, ISTC-WAS and Intel DevCloud. M. Coutino is partially supported by CONACYT and AIP RIKEN. E-mails: \{m.a.coutinominguez; g.j.t.leus\}@tudelft.nl, eisufi@seas.upenn.edu, takanori.maehara@riken.jp}
}

\markboth{Coutino et al. State-Space Network Topology Identification from Partial Observations}%
{Shell \MakeLowercase{\textit{et al.}}: Bare Demo of IEEEtran.cls for IEEE Journals}

\maketitle

\begin{abstract}
In this work, we explore the state-space formulation of a network process to recover, from partial observations, the underlying network topology that drives its dynamics. To do so, we employ subspace techniques borrowed from system identification literature and extend them to the network topology identification problem. This approach provides a unified view of the traditional network control theory and signal processing on graphs. In addition, it provides theoretical guarantees for the recovery of the topological structure of a deterministic continuous-time linear dynamical system from input-output observations even though the input and state interaction networks might be different. The derived mathematical analysis is accompanied by an algorithm for identifying, from data, a network topology consistent with the dynamics of the system and conforms to the prior information about the underlying structure. The proposed algorithm relies on alternating projections and is provably convergent. Numerical results corroborate the theoretical findings and the applicability of the proposed algorithm.
\end{abstract}


\begin{IEEEkeywords}
inverse eigenvalue problems, graph signal processing, signal processing over networks, state-space models, network topology identification
\end{IEEEkeywords}

\IEEEpeerreviewmaketitle

\def\Fc{\mathcal{F}_{{\rm c}}}
\def\Sb{\bm S}
\def\Ub{\bm U}
\def\Ubt{\bm U^{-1}}
\def\phib{\bm \phi}
\def\diag{{\rm diag}}
\def\Phib{\bm \Phi}
\def\Lambdab{\bm\Lambda}
\def\G{\mathcal{G}}
\def\V{\mathcal{V}}
\def\E{\mathcal{E}}
\def\Re{\mathbb{R}}
\def\xb{\bm x}
\def\Lb{\bm L}
\def\Wb{\bm W}
\def\xbh{\bm x_{{\rm f}}}
\def\Fnv{\mathcal{F}_{{\rm nv}}}
\def\Fcev{\mathcal{F}_{{\rm cev}}}
\def\yb{\bm y}

\def\x{{\bm{x}}}
\def\xk{{\bm{x}}(k)}
\def\xkk{{\bm{x}}(k+1)}
\def\y{{\bm{y}}}
\def\yk{{\bm{y}}(k)}
\def\A{{\bm{A}}}
\def\J{{\bm{J}}}
\def\B{{\bm{B}}}
\def\u{{\bm{u}}}
\def\uk{{\bm{u}}(k)}
\def\D{{\bm{D}}}
\def\C{{\bm{C}}}
\def\P{{\mathcal{P}}}
\def\G{{\mathcal{G}}}
\def\V{{\mathcal{V}}}
\def\L{\bm S}
\def\Ts{\mathcal{T}_s}
\def\Os{\mathcal{O}_s}
\def\E{{\mathcal{E}}}
\def\Rn{{\mathbb{R}^{N}}}
\def\Rl{{\mathbb{R}^{L}}}
\def\Rm{{\mathbb{R}^{M}}}
\def\Rr{{\mathbb{R}}}
\def\Cn{{\mathcal{C}_{n}}}
\def\On{{\mathcal{O}_{n}}}
\newcommand{\rnk}[1]{{\rm rank}(#1)}

\def\F{{\bm{F}}}
\def\Gm{{\bm{G}}}
\def\H{{\bm{H}}}
\def\U{{\bm{U}}}
\def\J{{\bm{J}}}

\def\Gx{\G_x}
\def\Ex{\E_x}
\def\Gu{\G_u}
\def\Eu{\E_u}
\def\Ss{\bm{S}_*}
\def\Lx{\L_x}
\def\Lu{\L_u}
\def\fx{f_x}
\def\fu{f_u}

\def\fxd{\tilde{f}_x}
\def\fud{\tilde{f}_u}
\def\fsd{\tilde{f}_*}

\def\Gammaa{\mathcal{O}_{\alpha}}
\def\Phia{\mathcal{T}_{\alpha}}
\def\Psia{\mathcal{N}_{\alpha}}
\def\Phig{\mathcal{T}_{\gamma}}
\def\Psig{\mathcal{T}_{\gamma}}
\def\nb{\bm n}
\def\wb{\bm w}
\def\vb{\bm v}
\def\al{\alpha}
\def\ybak{\bm{y}_{k,\alpha}}
\def\ubak{\bm{u}_{k,\alpha}}
\def\nbak{\bm{n}_{k,\alpha}}
\def\wbak{\bm{w}_{k,\alpha}}
\def\vbak{\bm{v}_{k,\alpha}}

\def\Ub{\bm U}
\def\Yb{\bm Y}
\def\Nb{\bm N}
\def\Xb{\bm X}
\def\Pu{\bm\Pi_{\Ub^T}^{\perp}}
\def\Puf{\bm\Pi_{\Ub_2^T}^{\perp}}
\def\Jo{\bm J}
\def\Gammag{\mathcal{O}_{\gamma}}
\def\be{\beta}
\def\ga{\gamma}

\def\Lambdabx{\bm\Lambdab_x}
\def\Lambdabu{\bm\Lambdab_u}

\def\lambdabx{\bm \lambda_x}
\def\Lambdabx{\bm\Lambda_x}

\def\Uaa{\bm U_{\ga,n}}

\def\Mxcal{\mathcal{M}_x}


\section{Introduction}

\IEEEPARstart{T}{he} topology of networks is fundamental to model the interactions between entities and to improve our understanding about the processes defined over them. We can find examples of such processes in transportation networks~\cite{deri2016new}, brain activity~\cite{sporns2010networks}, and epidemic dynamics or gene regulatory networks~\cite{mittler2004reactive}, to name a few. The coupling between the process and the network topology has led to the extension of signal processing (SP) techniques to tools that take into account the network structure to define signal estimators~\cite{narang2013signal,di2017adaptive,chen2015signalrecovery}, filters~\cite{shuman2013emerging,isufi2017autoregressive,minguez2019advances}, and optimal detectors~\cite{hu2016matched,chepuri2016subgraph,isufi2018Blind}.

While in several scenarios, the network structure is available, in many others, it is unknown and needs to be estimated. This not only for enhancing data processing tasks but also for data \emph{interpretability}, i.e., the network topology provides an abstraction for the underlying data dependencies. Therefore, retrieving the network structure or the dependencies of the involved members (variables) has become a research topic of large interest~\cite{kalofolias2016learn,mateos2018inference,segarra2017network,karanikolas2016multi,shen2017kernel,lu2017closed,zhou2007topology,shen2017tensor,shafipour2017network}.

Despite that many works focus on the problem of topology identification~\cite{mateos2019connecting,8347160} or Gaussian graphical modeling (GGM)~\cite{dempster1972covariance, lauritzen1996graphical}, most of these approaches only leverage a model based on so-called \emph{graph filters}~\cite{segarra2017optimal} or enforce a particular structure by a penalized likelihood approach with sparsity constraints~\cite{yuan2007model,kumar2019unified}. Among the works that consider an alternative interaction model e.g.,~\cite{shafipour2017network,shen2017kernel,ioannidis2018semi, ioannidis2019semi}, only few of them considers the network data as states of an underlying process. However, none of the above works study the case where the input, i.e., excitation or probing signal of the process and the process itself evolve according to different topologies.

In many instances, physical systems can be defined through a state-space formulation with known dynamics. An example is the diffusion of molecules in tissue. This process is used to analyze brain functions by mapping the interaction of the molecules with obstacles~\cite{iturria2008studying}. Besides the mapping of the brain, the area of neural dynamics presents the problem of network design in transport theory~\cite{hertz2018introduction}. In such applications, the topology that provides a stable desired response needs to be found, following a differential equation. Finally, we recall the problem of finding the connections between reactants in chemical reaction networks~\cite{angeli2009tutorial}. Here, the evolution of the number of the molecules in a solution is governed by the interaction of the reactants present in it. Hence, to understand the underlying chemical process, the relation between reactants is required. Considering these examples, it is clear that a more general approach, parting from first-principles, to find the underlying connections is required. So, in this work, we focus on the problem of \emph{retrieving the network structure} of a process modeled through a \emph{deterministic continuous-time dynamical system} whose system matrices depend on the underlying topology.


We further remark that all above works estimate the network topology under the assumption that the observations are available on all entities. We here depart from this assumption and consider the problem of network topology inference from partial observations. In this way, the focus is not only on retrieving a topology that describes the dynamics of the process under full network access, but also for cases where observations from a subset of the entities are only available. To address these tasks, we devise a framework that allows estimating the network topology from partial observations up to ambiguities defined through an equivalence class of restricted cospectral graphs~\cite{chung1997spectral}.


\subsection{Overview and main contributions}

Network topology inference from partial measurements is generally an ill-posed problem and has not been addressed by existing methods. In this work, we introduce a first-principles approach based on state-space models for model-driven topology estimation. Our contributions that broaden the state-of-the-art are threefold.

\begin{itemize}
    \item[--] We propose a first-order differential graph formulation for systems whose dynamics are described by deterministic continuous-time linear models. Under this model, it is possible to obtain a first-principles based network topology identification framework that leverages the geometric structure in its state-space description. Further, we provide conditions under which we can retrieve the network topology from sampled observations.

    \item[--] We extend subspace techniques from the system identification literature to network topology inference. These methods leverage the embedded geometrical structure of the state-space model that describes the dynamics. We provide ways to enforce the underlying dynamical structure on the topology identification process thereby guaranteeing system consistency, i.e., we enforce the structure present in the input-output data relations such that the estimated topology achieves the same dynamics as the original process.
    
    \item[--] We introduce the problem of network topology identification from partial observations. We mathematically analyze this problem and show it is ill-posed. We further describe the ambiguities present when recovering the network topology for measurements that do not uniquely identify the underlying structure. We develop an algorithm to estimate the network structure in the current setting. This algorithm relies on the alternating projections (AP) approach~\cite{bauschke1996projection} and is provably globally convergent. We prove that under mild conditions the AP method converges locally with a linear rate to a feasible solution. Finally, we extend the topology inference problem from partial observations to scenarios where incomplete or inaccurate process dynamics are present. For these cases, we provide a mathematical analysis and conditions that guarantee the convergence of the AP method in estimating a feasible network topology.
    
\end{itemize}


\subsection{Outline and notation}
This paper is organized as follows. Section~\ref{sec.problem} formulates the problem of network topology inference for continuous-time dynamical systems from sampled observations. Section~\ref{sec.model} introduces a first-order differential graph model and its state-space description. The system identification framework for the proposed graph-based model is introduced in Section~\ref{sec.state}. Section~\ref{sec.partial} analyzes the ambiguity of network topology identification from partial observations and provides an AP method to find a feasible network structure. Section~\ref{sec.consistency} discusses system consistency constraints that can be enforced into the AP method to match the network dynamics. Section~\ref{sec.numerical} corroborates the derived theory with numerical results. Finally, section~\ref{sec.conclusions} concludes the paper.

We adopt the following notation. Scalars, vectors, matrices and sets are denoted by lowercase letters $(x)$, lowercase boldface letters $(\bm x)$, uppercase boldface letters $(\bm X)$, and calligraphic letters $(\mathcal{X})$, respectively. $[\bm X]_{i,j}$ denotes the $(i,j)$th entry of the matrix $\bm X$ whereas $[\bm x]_i$ represents the $i$th entry of the vector $\bm x$. $\bm X^T$ and $\bm X^{-1}$ are the transpose and the inverse of $\bm X$, respectively. The Moore-Penrose pseudoinverse of $\bm X$ is denoted by $\bm X^{\dagger}$. ${\rm vec}(\cdot)$ is the vectorization operation. ${\rm bdiag}(\bm X,\bm Y)$ denotes a block diagonal matrix whose blocks are given by the matrices $\bm X$ and $\bm Y$. $\bm I$ is the identity matrix of appropriate size. $\Vert \bm X\Vert_{\rm F}$ and $\Vert\bm X\Vert_2$ denote the Frobenius- and $\ell_2$-norm of $\bm X$, respectively. ${\rm span}(\cdot)$ and ${\rm rank}(\cdot)$ are the span and rank of a matrix, respectively. Finally, we use $[K]$ to denote the set $\{1,2,\ldots,K\}$ and $\mathcal{D}_K$ to denote the set of $K\times K$ diagonal matrices.

\section{Problem Statement}\label{sec.problem}

Consider a set of $N$ nodes $\V=\{v_{1},\ldots,v_{N}\}$ representing cooperative agents such as sensors, individuals, and biological structures. On top of these agents, a process $\P$ is defined that describes the evolution through time of the agent signals $\x(t)\in\Re^{N}$. The signal $\x(t)$ is such that the $i$th component $x_i(t)$ represents the signal evolution of agent $v_i$. The agent interactions w.r.t. the evolution of the signal $\x(t)$ are captured by a graph $G_x = (\V,\E_x)$, where $\E_x$ is the edge set of this graph. We consider that process $\P$ is represented by a first-order differential model
\begin{subequations}\label{eq.modelInit}
	\begin{eqnarray}
		\partial_t\x(t) &=& h(\V,\E_x,\E_u,\x(t),\u(t))\\
		\bm y(t) &=& c(\V,\E_x,\E_u,\x(t),\u(t)),
	\end{eqnarray}
\end{subequations}
where $\partial_t\x(t):=d\x(t)/dt$ and $h(\cdot)$ and $c(\cdot)$ are maps that describe respectively the dynamics of the signal $\x(t)$ and the observables $\y(t)$. In model \eqref{eq.modelInit}, $\u(t)$ is the (known) system input and $\E_u$ is the edge set of another graph $G_u = (\V,\E_u)$ that captures the interactions between the elements of $\V$ for $\u(t)$. Put simply, process~\eqref{eq.modelInit} describes the evolution of the signal $\x(t)$ under the influence of the maps $h(\cdot)$ and $c(\cdot)$ and the network topologies $G_x = (\V,\E_x)$ and $G_u = (\V,\E_u)$. For future reference, we will represent both graphs as $G_* = \{\V,\E_*\}$, where $``*"$ is a space holder for $x$ and $u$.

We can differently define the process $\P$ through the set
\begin{equation}
    \P := \{h(\cdot),c(\cdot)\},
    \label{eq.pdescript}
\end{equation}
which contains the interactions in the system. While process $\P$ describes a continuous-time process, we usually have access to \emph{sampled realizations of it}, i.e., the observables $\y(t)$ are collected on a finite set of time instances $\mathcal{T}~:=~\{t_1,t_2,\ldots,t_T\}$. As a result, we might also want to consider \emph{discrete-time} approximations of~\eqref{eq.pdescript}, where we either have a discrete-time realization of $\P$ and/or a finite number of observables $\mathcal{Y} := \{\y(t)\}_{t\in\mathcal{T}}$.

With this in place, we ask the following question: \emph{how can we retrieve the network topologies $\G_x$ and $\G_u$ for the agent signal $\x(t)$ and input signal $\u(t)$ given the known process $\P$ [cf. \eqref{eq.pdescript}], input signal $\bm u(t)$, and observables in $\mathcal{Y}$?}

In this work, we answer the above question by employing results from Hankel matrices~\cite{takens1981detecting} and linear algebra whose foundations lie in the system identification theory~\cite{viberg1995subspace}. The approach we will consider is based on subspace techniques that are by definition \emph{cost function free}. This differs from the commonly used techniques in network topology identification~\cite{mateos2019connecting} where the learned topology heavily depends on the considered cost function (e.g., smoothness or sparsity). If this prior knowledge is incorrect, it might lead to structures not related to physical interactions. The adopted techniques provide theoretical insights in when and how the underlying network structures can be identified. Furthermore, they have two other benefits. First, they impose none parameterization on the dynamical model and, therefore, avoid solving nonlinear optimization problems as in prediction-error methods~\cite{ljung2002prediction}. Second, they allow identifying $G_* = \{\V,\E_*\}$ from sampled data, i.e., by having access only to a subset of the observables $\y(t)$.

We finally remark that although more general models can be considered, e.g., higher-order differential models, we restrict ourselves to first-order models to ease exposition and provide a thorough and stand-alone work. Nevertheless, first-order differential models are of broad interest as they include diffusion processes --see \cite{chung2007diffusion} and references therein.

\section{First Order Differential Graph Model}\label{sec.model}

The structure of $G_* = \{\V,\E_*\}$ is mathematically represented by a matrix $\Ss$ (sometimes referred to as a \emph{graph shift operator}~\cite{sandryhaila2013discrete,ortega2018graph}) that has as candidates the graph adjacency matrix, the graph Laplacian, or any other matrix that captures the connections (relations) of the elements in the network. We then consider that process $\P$ is described through the linear continuous-time dynamical system
\begin{subequations}\label{eq.lds}
	\begin{eqnarray}
		\partial_t\x(t) &=& \fx(\Lx)\x(t) + \fu(\Lu)\bm u(t)\in\mathbb{R}^{N}\\
		\bm y(t) &=& \bm C\x(t) + \D\bm u(t)\in\mathbb{R}^{L},
	\end{eqnarray}
\end{subequations}
where $\C\in\mathbb{R}^{L\times N}$ and $\D~\in~\mathbb{R}^{L\times N}$ are system matrices related to the observables $\bm y(t)$. The matrix function $f_* : \mathbb{R}^{N\times N} \rightarrow \mathbb{R}^{N\times N}$ is defined via the Cauchy integral~\cite{higham2008functions}
\begin{equation}\label{eq.intDef}
	f_*(\L_*) := \frac{1}{2\pi i}\int_{\Gamma_{f_*}}f_{\rm s,*}(z)R(z,\L_*)dz,
\end{equation}
where $f_{\rm s,*}(\cdot)$ is the scalar version of $f_*(\cdot)$ and is analytic on and over the contour $\Gamma_{f_*}$. Here, $R(z,\L_*)$ is the resolvent of $\L_*$ given by
\begin{equation}
	R(z,\L_*) := (\L_* - z\bm I)^{-1}.
\end{equation}
From the definition of the dynamical system in~\eqref{eq.modelInit}, it can be seen that \eqref{eq.lds} is a proper representation of the family of first-order (linear) differential models, where the system matrices are \emph{matrix functions} of the matrix representing the graph. Observe that although restricted to the class of linear models, \eqref{eq.lds} captures different settings of practical interest such as diffusion on networks~\cite{chung2007diffusion}, graph filtering operations~\cite{segarra2017optimal,minguez2019advances}, random walks~\cite{lovasz1993random}, and first-order autoregressive graph processes~\cite{isufi2018forecasting}.

The corresponding discrete-time state-space system related to~\eqref{eq.lds} is
\begin{subequations}\label{eq.ltia}
	\begin{eqnarray}
		\xkk &=& \fxd(\Lx)\xk + \fud(\Lu)\uk +\bm w(k),\\
		\yk &=& \C\xk + \D\uk + \bm v(k),
	\end{eqnarray}
\end{subequations}
where $\tilde{f}_*(\cdot)$ is a matrix function (to be specified in the sequel) and $\xk \in \Rr^{N}$, $\uk\in\Rr^{N}$, and $\yk\in\Rr^{L}$ are the discrete counterparts of $\x(t)$, $\u(t)$, and $\y(t)$, respectively. The variables, $\bm w(k)$ and $\bm v(k)$ represent perturbations in the states and additive noise in the observables, respectively. By defining then the matrices $A(\Lx) := \fxd(\Lx)$ and $\bm B({\Lu}) := \fud(\Lu)$, the connection between the continuous-time \eqref{eq.lds} and the discrete-time representation \eqref{eq.ltia} is given by
\begin{align}\label{eq.relAB}
	\bm A(\L_i) &:= \fxd(\Lx) = e^{\fx(\Lx)\tau}\\
	\bm B({\L_j}) &:= \fud(\Lu) = \bigg(\int_{0}^{\tau}e^{\fx(\Lx)t}dt\bigg)\fu(\Lu)\label{eq.relAB2},
\end{align}
where $\tau$ is the sampling period and  $e^{\bm X} = \sum_{k=0}^{\infty}\frac{1}{k!}\bm X^{k}$ is the matrix exponential function~\cite{higham2008functions}. Using then~\eqref{eq.relAB} and~\eqref{eq.relAB2}, we can compactly write model~\eqref{eq.ltia} as
\begin{subequations}\label{eq.lti}
	\begin{eqnarray}
		\xkk &=& \A\xk + \B\uk +\bm w(k)\\
		\yk &=& \C\xk + \D\uk + \bm v(k), 	\label{eq.ltia2}
	\end{eqnarray}
\end{subequations}
where we dropped the dependency of the system matrices $\A$ and $\B$ from $\Lx$ and $\Lu$ to simplify the notation. Throughout the paper, we assume that the matrices $\C$ and $\D$ are known, i.e., we know how our observables are related to the states and inputs..

\section{State-Space Identification}\label{sec.state}


The family of subspace state-space system identification methods~\cite{viberg1994subspace} relies on geometrical properties of the dynamical model~\eqref{eq.lti}. By collecting a batch of $\alpha$ different observables $\y(t) \in \mathbb{R}^L$ into the $\alpha L-$dimensional vector
$$\ybak \triangleq [\bm y(k)^T,\,\ldots,\, \bm y(k+\al-1)^T]^T,$$
we get the relationship
\begin{equation}
    \ybak = \Gammaa\xk + \Phia\ubak + \nbak,
    \label{eq.ltia4n}
\end{equation}
where
\begin{equation}
    \Gammaa \triangleq \begin{bmatrix}
        \C \\
        \C\A\\
        \vdots\\
        \C\A^{\alpha-1}
    \end{bmatrix},
    \label{eq.gamma}
\end{equation}
is the \emph{extended observability} matrix of the system~\cite{ogata2002modern} and
\begin{equation}\label{eq.InputMat}
	\Phia \triangleq \begin{bmatrix}
	\D & 	\bm 0 	& \bm 0 & \cdots& \cdots & \bm 0 \\
	\C\B & 	\D 		& \ddots &	& \cdots & \bm 0 \\
	\vdots & \ddots & \ddots & &  & \vdots\\
	\vdots & & &  & & \bm 0\\
	\C\A^{\al-2}\B 	& \C\A^{\al-3}\B &\cdots & \cdots & \C\B & \D
	\end{bmatrix},
\end{equation}
is the matrix that relates the batch input vector
$$\ubak \triangleq [\bm u(k)^T,\,\ldots,\, \bm u(k+\al-1)^T]^T, $$
with the batch observables $\ybak$. The vector $\nbak$ comprises the batch noise that depends on the system perturbation $\{\bm w(k)\}$ and on the observable noise $\{\bm v(k)\}$. The detailed structure of $\nbak$ is unnecessary for our framework. The size of the batch $\al$ is a user-specified integer and must be larger than the number of states. Assuming the number of nodes is the number of states, this implies $\al > N$.

Given then expression \eqref{eq.ltia4n} and the structures for $\Gammaa$ in \eqref{eq.gamma} and $\Phia$ in \eqref{eq.InputMat}, we proceed by estimating first $\A$ using the algebraic properties of \eqref{eq.gamma} and subsequently $\B$ from the structure of \eqref{eq.InputMat} and a least squares problem.

\vskip2mm
\noindent\textbf{Retrieving the state matrix $\A$.}
A basic requirement for estimating $\A$ is system \emph{observability}~\cite{ogata2002modern}. Observability allows to infer the system state from the outputs for any initial state and sequence of input vectors. Put differently, we can estimate the entire system dynamics from input-output relations. System \eqref{eq.lti} is observable if the system matrices $\{\A,\C\}$ satisfy ${\rm rank}(\mathcal{O}_N) = N$.

Consider now a set of $Q \triangleq  T + \alpha-1$ input-output pairs $\{\yb(k),\bm{u}(k)\}_{k=1}^{Q}$. By stacking the discrete batch vectors $\ybak$ for all observations into the matrix
$$\bm Y = [\yb_{1,\alpha},\,\ldots,\yb_{T,\alpha}]\label{eq.Ymtx},$$
and using expression \eqref{eq.ltia4n}, we can generate the Hankel-structured data equation~\cite{verhaegen2007filtering}
\begin{equation}
    \bm Y = \Gammaa\bm X + \Phia\bm U + \bm N,
    \label{eq.hankel}
\end{equation}
where $\bm X$ is the matrix that contains the evolution of the states accross the columns, i.e.,
$$    \bm X = [\xb(1),\,\ldots,\,\xb(T)],$$
and where the input $\bm U$ and noise $\bm N$ are block Hankel matrices defined as
$$ \bm U = [\bm{u}_{1,\alpha},\,\ldots,\bm{u}_{T,\alpha}] \quad \text{and}\quad \bm N = [\bm{n}_{1,\alpha},\,\ldots,\bm{n}_{T,\alpha}]. $$

The structure in~\eqref{eq.hankel} is at the core of system identification methods~\cite{viberg1995subspace} and this arrangement leads naturally to a subspace-based approach to find $\A$. To detail this, consider the noise-free case
\begin{equation}\label{eq:HanNoNoise}
    \bm Y = \Gammaa\bm X + \Phia\bm U.
\end{equation}
Since the control inputs are known and provided that $\U$ is full row-rank, we can project out $\bm U$ from \eqref{eq:HanNoNoise} by right-multiplying $\Yb$ with the projection matrix
\begin{equation}
    \Pu \triangleq \bm I - \bm U^T(\bm U\bm U^T)^{-1}\bm U.
\end{equation}
Since $\Ub\Pu = \bm 0$, this operation leads to
\begin{equation}\label{eq:projUmain}
    \Yb\Pu = \Gammaa\Xb\Pu.
\end{equation}

Under the assumption that the inputs $\Ub$ are \emph{sufficiently exciting}~\cite{verhaegen2007filtering} (the inputs excite all the modes of the system), the matrix $\Xb\Pu$ has full row-rank. This implies
\begin{equation}
    {\rm span}(\Yb\Pu) = {\rm span}(\Gammaa).
    \label{eq.span}
\end{equation}
That is, the signal subspace of the projected observables $\Yb\Pu$ coincides with that of the extended observability matrix. Therefore, ${\rm span}(\Gammaa)$ can be estimated from $\Yb\Pu$ and, subsequently, the system matrix $\A$ by using the block structure of $\Gammaa$ in~\eqref{eq.gamma}. To detail this procedure, consider the \emph{economy-size} singular value decomposition (SVD) of $\Yb\Pu$
\begin{equation}
    \Yb\Pu = (\Yb\Pu)_N = \Wb_{\al,N}\bm\Sigma_N\bm{V}_{\al,N}^T,
\end{equation}
where $(\Yb\Pu)_N$ is the $N$-rank approximation of $\Yb\Pu$ and equality holds because of the full row-rank assumption of $\Xb\Pu$. Then, from condition~\eqref{eq.span}, we have
\begin{equation}
    \Gammaa = \Wb_{\alpha,N}\bm T,
    \label{eq.urel}
\end{equation}
for some invertible similarity transform matrix $\bm T\in\mathbb{R}^{N\times N}$. Given then $\bm C$ known and full column-rank, we can estimate $\bm T$ from the structure of  $\Gammaa$ [cf.~\eqref{eq.gamma}] as
\begin{equation}
    \hat{\bm T} = (\Jo\Wb_{\al,N})^{\dagger}\C,
    \label{eq.Test}
\end{equation}
where $(\cdot)^{\dagger}$ denotes the Moore-Penrose pseudoinverse and $\Jo\Wb_{\al,N}$ denotes the first $L$ rows of $\Wb_{\al,N}$. If $\bm C$ is not full column-rank, the transform matrix $\bm T$ is not unique. We will deal with the non-uniqueness of $\bm T$ in Section~\ref{sec.partial}.

Finally, to get $\A$, we exploit the \emph{shift invariant} structure of $\Gammaa$ w.r.t. $\A$, i.e.,
\begin{equation}
    \bm{J}_{\rm u}\Gammaa\A = \bm{J}_{\rm l}\Gammaa \in\mathbb{Re}^{(\alpha-1)L\times N},
    \label{eq.sinv}
\end{equation}
where $\bm{J}_{\rm u}\Gammaa$ and $\bm{J}_{\rm l}\Gammaa$ denote the upper and lower $(\al-1)L$ blocks of $\Gammaa$ [cf.~\eqref{eq.gamma}], respectively. By substituting then the expression \eqref{eq.urel} for $\Gammaa$ into \eqref{eq.sinv} and by using the estimate $\hat{\bm T}$ \eqref{eq.Test} for the transform matrix $\bm T$, the least squares estimate for $\A$ is given by
\begin{equation}
    \hat{\A} = (\bm J_{\rm u}\Wb_{\al,N}\hat{\bm T})^{\dagger}\bm J_{\rm l}\Wb_{\al,N}\hat{\bm T}.
    \label{eq.Aest}
\end{equation}

\vskip2mm
\noindent\textbf{Retrieving the input matrix $\B$.}
While the input matrix $\B$ can be obtained following a similar approach as for $\bm A$~\cite{verahegen1992subspace} (yet with a more involved shift-invariant structure), we here compute it together with the initial state $\x(0)$ by solving a least squares problem. 

To do so, we expand system \eqref{eq.lti} to all its terms as
\begin{equation}\label{eq.expYk}
	\begin{aligned}
	\yk{} - (\bm u(k)^T \otimes \bm I_L)&{\rm vec}(\D) =\\ \C\A^k\x(0) + &\bigg(\sum\limits_{q=0}^{k-1}\bm u(q)^T\otimes \C\A^{k-q-1}\bigg){\rm vec}(\B), 
	\end{aligned}
\end{equation}
where $\bm I_L$ is the $L\times L$ identity matrix and vec$(\cdot)$ is the vectorization operator. We then collect the unknowns $\bm x(0)$ and ${\rm vec}(\bm B)$ into the vector $\bm\theta = [\bm x(0)^T\,{\rm vec}(\bm B)^T]^T$ and define the matrix
$$\hat{\bm\Psi} \triangleq \big[\C\hat{\A}^{k},\; \sum\limits_{q=0}^{k-1}\bm u(q)^T\otimes \C\hat{\A}^{k-q-1}\big],$$
where we substituted the state transition matrix $\bm A$ with its estimate \eqref{eq.Aest} while the other quantities $\C, \D$ and $\u(k)$ are known. Finally, we get the input matrix $\B$ by solving
\begin{equation}
	\begin{array}{ll}
		\underset{\bm\theta}{\min}\frac{1}{Q}\sum_{k=1}^{Q}\Vert \y(k) - \hat{\bm\Psi}\bm\theta\Vert_2^2.
	\end{array}
	\label{eq.Best}  
\end{equation}

Given the system matrices $\{\hat{\A},\hat\B,\C,\D\}$, the state interaction graph $\G_x$ ($\Lx$) and input interaction graph $\G_u$ ($\Lu$) can be obtained by enforcing the constraints derived from the information of the physical process, i.e., the model dynamics. Hence, the network structure depends heavily on the estimate of the subspace span of $\Gammaa$.

\subsection{Noisy setting}\label{subse:instrum}

We discuss here a method for estimating $\A$ and $\B$ with perturbations in the state evolution $\bm x(t)$ and noise in the observables $\bm y(t)$. To tackle these challenges, we lever \emph{instrumental variables}.

Consider the following partition of the observables and input matrix
$$\Yb = [\Yb_{1}^T,\, \Yb_{2}^T]^T \quad\text{and}\quad \Ub = [\Ub_{1}^T,\, \Ub_{2}^T]^T,$$
where $\Yb_1$ (resp. $\Ub_{1}$) and $\Yb_2$ (resp. $\Ub_{2}$) have respectively $\beta$ and $\gamma$ blocks of size $L$ with $\gamma = \alpha - \beta$. The model for the observables $\Yb_{2}$ is [cf.\eqref{eq.hankel}]
\begin{equation}
    \bm Y_2 = \Gammag\bm X_2 + \mathcal{T}_\gamma\bm U_2 + \bm N_2.
    \label{eq.hankel2}
\end{equation}
Following then the projection-based strategy to remove the dependency from $\bm U_2$, we can write the projected noisy observables as [cf. \eqref{eq:projUmain}]
\begin{equation}
    \Yb_2\Puf = \Gammag\Xb_2\Puf + \Nb_2\Puf,
    \label{eq.noisyproj}
\end{equation}
with $\Xb_2$ and $\Nb_2$ being the respective partitions of the state evolution and perturbation matrix.

Observe from \eqref{eq.noisyproj} that the signal subspace is corrupted with the noise projection term $\Nb_2\Puf$. To remove the latter, we follow the instrumental variable approach. Since the noise is uncorrelated with the input $\U_1$ and since the noise present in the second block $\Nb_2$ is uncorrelated with the noise in the first block $\Nb_1$, it holds that
 \begin{align}\label{eq.instVar}
     \lim\limits_{T\rightarrow\infty}\frac{1}{T}\Nb_2[\bm U_1^T,\, \Yb_1^T] &= [\bm 0,\,\bm 0].
 \end{align}
Subsequently, we introduce the variable $\bm Z_{1} \triangleq [\Ub_1^T,\,\Yb_1^T]$ and consider the matrix
 \begin{equation}
     \bm G_{1} \triangleq \frac{1}{N}\Yb_2\Puf\bm Z_{1},
     \label{eq.gmtx}
 \end{equation}
to do the estimation of the signal subspace. The matrix $\bm G_{1}$ is asymptotically ``noise free" due to \eqref{eq.instVar}. From the economy-size SVD ($N$-rank approximation) of $\bm G_{1}$, we get
 \begin{equation}
     \bm G_{1} \approx \Wb_{\ga,N}\bm\Sigma_N\bm V_{\ga,N}^T.
     \label{eq.gsvd}
 \end{equation}
Finally, by using $\Wb_{\gamma,N}$, the system matrices $\A$ and $\B$ can be estimated from expressions~\eqref{eq.Test},~\eqref{eq.Aest}, and~\eqref{eq.Best}.

Note that the estimator for the signal subspace [cf.\eqref{eq.gsvd}] has intrinsic statistical properties. To keep under control the variance and bias, different works~\cite{van1995unifying} have proposed to left and right weigh the matrices in~\eqref{eq.gmtx} before the SVD. As establishing optimal weighting matrices requires further analysis, we do not detail them here and refer the interested reader to~\cite{van2012subspace}.

\subsection{Continuous-time model identification}

Given the discrete-time system matrices $\{\hat{\A},\hat{\B},\C,\D\}$, we can estimate the continuous-time transition matrices of~\eqref{eq.lds}. Observe from \eqref{eq.relAB} that estimating $\hat{f}_x(\Lx)$ from $\hat{\A}$ requires computing only the matrix logarithm of $\hat{\A}$. Nevertheless, as stated by the following proposition, there are conditions process $\P$ should meet for this matrix logarithm to $(i)$ exist and $(ii)$ be unique. For clarity, the involved matrices are \emph{real} since we have matrix functions that map real matrices onto real matrices.
\begin{proposition}
Let the analytic function $f_{\rm s,x}(\cdot)$ of $f_x(\Lx)$ in \eqref{eq.relAB} satisfy
\begin{itemize}
		\item[(a)] $e^{f_{\rm s,x}(z)} \not\in \mathbb{R}^{-}, \; \forall \; z \in  {\rm eig}(\Lx)$
		\item[(b)] $f_{\rm s,x}(z) > -\infty, \; \forall \; z \in {\rm eig}(\Lx)$
\end{itemize}
where $\mathbb{R}^{-}$ is the closed negative real axis. Then, process $\mathcal{P}$ guarantees that $(i)$ and $(ii)$ are met.
\label{prop.cond}
\end{proposition}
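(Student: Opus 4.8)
The plan is to reduce the statement to the classical existence-and-uniqueness criterion for the matrix logarithm and then to transport the scalar hypotheses (a)--(b) onto the spectrum of $\A$ through the spectral mapping property of matrix functions. Recall from \eqref{eq.relAB} that $\A = \fxd(\Lx) = e^{\fx(\Lx)\tau}$, so reconstructing $\fx(\Lx)$ from $\A$ is exactly the computation of a matrix logarithm, $\fx(\Lx) = \tau^{-1}\log(\A)$. I would therefore argue about when this logarithm is well-defined and single-valued. First I would invoke the standard characterization~\cite{higham2008functions}: a real nonsingular matrix $\A$ possesses a unique principal logarithm, namely the unique logarithm whose eigenvalues lie in the strip $\{\,w : |{\rm Im}\,w| < \pi\,\}$, precisely when $\A$ has no eigenvalue on the closed negative real axis $\mathbb{R}^-$. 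Hence the proposition follows once I show that (a)--(b) guarantee that ${\rm eig}(\A)$ avoids $\mathbb{R}^-$ (including the origin).

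The bridge is the spectral mapping theorem applied to the Cauchy-integral definition~\eqref{eq.intDef}: the eigenvalues of $\fx(\Lx)$ are $\{\,f_{\rm s,x}(z) : z\in{\rm eig}(\Lx)\,\}$, and consequently the eigenvalues of $\A = e^{\fx(\Lx)\tau}$ are $\{\,e^{\tau f_{\rm s,x}(z)} : z\in{\rm eig}(\Lx)\,\}$. Condition (b), $f_{\rm s,x}(z) > -\infty$, keeps each such eigenvalue finite and strictly nonzero, so $\A$ is nonsingular; this secures existence, claim $(i)$. Condition (a), $e^{f_{\rm s,x}(z)}\notin\mathbb{R}^-$, forces ${\rm Im}\,f_{\rm s,x}(z)$ to avoid the odd multiples of $\pi$ that would place $e^{\tau f_{\rm s,x}(z)}$ on the negative real axis, so no eigenvalue of $\A$ sits on $(-\infty,0)$; together with nonsingularity this empties $\mathbb{R}^-\cap{\rm eig}(\A)$ and yields uniqueness of the principal logarithm, claim $(ii)$. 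Because $\A$ is real and has no negative real eigenvalues, the resulting principal logarithm is itself real, consistent with $\fx$ mapping real matrices onto real matrices.

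The step I expect to be delicate is uniqueness, claim $(ii)$. The matrix logarithm is intrinsically multivalued---distinct primary logarithms differ by adding integer multiples of $2\pi i$ to the eigenvalue imaginary parts, and further non-primary logarithms can appear when $\Lx$ is defective or has repeated eigenvalues---so the argument must genuinely lean on the no-eigenvalue-on-$\mathbb{R}^-$ condition to single out the principal branch, rather than merely on invertibility. A related point I would verify carefully is the interaction of the sampling period $\tau$ with condition (a): as written the scalar test is phrased for $\tau=1$, so to be fully rigorous in the eigenvalue-placement argument I would either absorb $\tau$ into $f_{\rm s,x}$ or restate (a) as $e^{\tau f_{\rm s,x}(z)}\notin\mathbb{R}^-$. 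The spectral mapping and nonsingularity steps are otherwise routine consequences of~\eqref{eq.intDef} and the identity $e^{\bm M}e^{-\bm M}=\bm I$.
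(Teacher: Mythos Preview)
Your proposal is correct and follows the same line as the paper, which in fact does not supply a standalone proof but only the one-sentence justification immediately after the proposition: the conditions force $\A$ to have no eigenvalues on $\mathbb{R}^{-}$, whence the principal logarithm exists and is unique. Your argument via the spectral mapping theorem and the classical characterization of the principal logarithm is precisely the mechanism behind that sentence, and your remark about the implicit $\tau=1$ in condition~(a) is a valid observation about the statement itself.
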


If the conditions of Proposition~\ref{prop.cond} are met for $f_{{\rm s},x}(\cdot)$, then $\A$ has no eigenvalues on $\mathbb{R}^{-}$. This implies that the principal logarithm\footnote{For two matrices $\bm X$ and $\bm Y$, $\bm X$ is said to be the matrix logarithm of $\bm Y$ if $e^{\bm X} = \bm Y$. If a matrix in invertible and has no-negative real eigenvalues, there is a unique logarithm which is called principal logarithm~\cite{higham2008functions}.} of $\A$, $\ln(\A)~=~ \tau f_x(\Lx)$, exists and is unique. Note that if $\A$ is real, its principal logarithm is also real. Hence, if $f_{{\rm s},x}(\cdot)$ satisfies the conditions of Proposition~\ref{prop.cond}, the continuous-time transition matrices for nonsingular $\hat\A - \bm I$ are given by
\begin{subequations}
	\begin{align}
		\hat{f}_x(\Lx) =& \frac{1}{\tau}\ln(\hat{\A})\\
		\hat{f}_u(\Lu) =& (\hat{\A} - \bm I)^{-1}\hat{f}_x(\Lx)\hat{\B}\label{eq.gEst}.
	\end{align}
\end{subequations}
The expression for $\hat{f}_u(\Lu)$ is derived from
\begin{equation}
	\int\limits_{0}^{\tau}e^{\fx(\Lx)t}dt = \fx(\Lx)^{-1}(e^{\fx(\Lx)\tau}- \bm I).
\end{equation}

Given $\hat{f}_x(\Lx)$ and $\hat{f}_u(\Lu)$, we are left with the estimation of the underlying topologies. The network topology can be estimated with methods that promote sparse representations~\cite{segarra2017network,coutino2018sparsest}, covariance methods~\cite{dempster1972covariance,kumar2019unified} or by \emph{solving a root finding} problem, i.e., inverse maps. We analyzed this last case in the shorter version of this work \cite{coutino2019ITA}. 

However, for the case of partial observations, the similarity transform can not be uniquely identified making the above methods not applicable for identifying the network topology due to the ambiguities in the system matrices. Therefore, in the next section, we discuss such ambiguities and introduce a AP method for estimating the network topology.



\begin{myremark}
    Although we focused principally on continuous-time models, all stated results hold also for purely discrete models by doing the appropriate minor changes for the functional dependencies of the system matrices.
\end{myremark}

\section{Partially Observed Network} \label{sec.partial}


So far, we considered the observables $\y(t)$ are available for the whole network, i.e., $\bm C = \bm I$ or more generally ${\rm rank}(\bm C) = N$. The latter allows a unique estimate for the similarity matrix $\bm T$ in \eqref{eq.Test}. 
We here move to the more involved case where it is not possible to observe the process on all nodes or the observations are not sufficiently rich, i.e., ${\rm rank}(\C) < N$. This disables to find a unique transform matrix $\bm T$ and, instead of retrieving the original system matrices, we retrieve an estimate of a set of equivalent matrices
\begin{align}
				\A_{\rm T} \triangleq \bm T\bm A\bm T^{-1},\;\;\;
    \B_{\rm T} \triangleq \bm T\B,
    \label{eq.At}
\end{align}
which also \emph{realize} the system in \eqref{eq.ltia4n}. Further, it follows from \eqref{eq.At} that (although $\A_{\rm T}\neq \A$ in general) if $\A$ is diagonalizable, i.e., $\A = \bm Q_{A}\bm\Lambda_{A}\bm Q_{A}$, then
\begin{equation}
    {\rm eig}(\A) = {\rm eig}(\A_{\rm T}),
    \label{eq.eigA}
\end{equation}
holds, where ${\rm eig}(\A)$ are the eigenvalues of $\A$. The equality~\eqref{eq.eigA} yields since $\A$ and $\A_{\rm T}$ are \emph{similar} matrices\footnote{Two matrices $\bm X$ and $\bm Y$ are said to be similar if there exists an invertible matrix $\bm P$ such that $\bm X = \bm P \bm Y \bm P^{-1}$.}.

In these situations, we cannot remove the ambiguity in the system matrices without additional information. In the sequel, we motivate why this disambiguation problem is particularly hard. We further derive a method to estimate an \emph{approximately feasible} realization of the \emph{network topology} related to the signal subspace and to the knowledge of the (bijective) scalar mappings $\{f_{{\rm s},x}(\cdot),f_{{\rm s},u}(\cdot)\}$. 

Finally, we stress that despite that in the following part we will concentrate in recovering the state network topology $\mathcal{G}_x$, this is not a problem for cases when the input network topology $\mathcal{G}_u$ is of interest. For such cases, an additional step can always be performed to retrieve $\bm S_u$ from the transformed matrix $\bm B_T$. For instance, from the estimate of $\bm A$ based on the state network topology and $\bm A_T$, the transform $\hat{\bm T}$ can be estimated. Subsequently, the inverse operation is applied to $\B_T$ to get $\hat{\bm B}$ and, therefore, $\mathcal{G}_u$. Hence, there is no detriment on focusing in recovering $\mathcal{G}_x$.


\subsection{The graph inverse eigenvalue problem}

To start, consider the shift operator $\Lx$ belongs to a set $\mathcal{S}$ which contains all permissible matrices representing the dynamics of the state that lead to $\A_{\rm T}$. The set $\mathcal{S}$ describes the properties of the graph representation matrix, e.g., zero diagonal (adjacency) $[\Lx]_{n,n} = 0\,\forall\,n\in [N] $, unitary diagonal (normalized Laplacian) $[\Lx]_{n,n} = 1\,\forall\,n\in [N] $, zero eigenvalue related to the constant eigenvector (combinatorial Laplacian) $\Lx\bm 1 = \bm 0$, symmetry (undirected graphs) $\Lx = \Lx^T$.

The ambiguity \eqref{eq.eigA} introduced by the similarity transform, $\bm T$, transforms the problem of finding $\Lx$ into
\begin{equation}
    \begin{array}{ll}
         \text{find} & \L \\
         \text{subject to} &  \L\in \mathcal{S}\\
            & {\rm eig}(\L) = \lambdabx,
    \end{array}
    \label{eq.feasi1}
\end{equation}
where $\lambdabx = {\rm eig}(\Lx)$ is the vector containing the eigenvalues of $\Lx$ obtained by applying the inverse map to the eigenvalues of $\bm A_T$. Problem~\eqref{eq.feasi1} recasts the network topology identification problem to the problem of finding a graph shift operator matrix $\Lx$ that has a fixed spectrum. This problem belongs to the family of \emph{inverse eigenvalue problems}~\cite{chu2005inverse}. In a way, problem \eqref{eq.feasi1} is the complement of the network spectral template approach~\cite{segarra2017network}. Here, instead of having an eigenbasis and searching for a set of eigenvalues, we have a set of eigenvalues and search for an eigenbasis. 

Problem \eqref{eq.feasi1} is ill-posed as, in most cases, its solution is non-unique. Thus, it leads to  ambiguities in its solution. In what follows, we characterize this ambiguity in terms of equivalence classes between graphs and provide a method able to find a network topology satisfying the conditions of~\eqref{eq.feasi1}.

\begin{figure}
    \centering
    \includegraphics[width=0.35\textwidth]{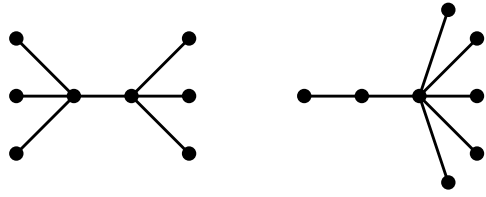}
    \caption{Two cospectral trees with the same number of edges. With respect to the adjacency matrix, almost all trees are non determined by their spectrum. Both graphs have the same characterisitic polynomial $t^4(t^4 - 7t^2 + 9)$, hence cospectral.}
    \label{fig.cospectral}
\end{figure}


\subsection{Ambiguous graphs: cospectral graphs}\label{subsec:ambig}

We find ambiguities in graph topologies in graphs that belong to an equivalence class~\cite{devlin2003sets}. An example of equivalent graphs, in terms of their spectrum, are the isomorphic graphs~\cite{chung1997spectral}. For unlabeled graphs, two graphs $\mathcal{G}$ and $\mathcal{G}^\prime$ with respective graph shift operator matrices $\L$ and $\L^\prime$ are equivalent if there exists a permutation matrix $\bm P$ such that
\begin{equation}
    \bm S = \bm P\bm S^\prime\bm P^T.
\end{equation}
That is, the graph representation matrices are row- and column-permuted versions of each other. The permutation matrix $\bm P$ implements the isomorphism. Therefore, if the only information available is the graph eigenvalues $\bm \lambda$, the graphs are always indistinguishable up to node reordering. This situation is not at all undesirable as the ordering of the nodes is often not important. However, isomorphic graphs are not only ones sharing the spectrum.

Graphs that share the spectrum are called \emph{cospectral} (or isospectral) graphs~\cite{brouwer2011spectra}. Note, however, that cospectral graphs are not necessarily isomorphic. Fig.~\ref{fig.cospectral} illustrates an example of two cospectral graphs. Graph cospectrallity renders the feasible set of~\eqref{eq.feasi1} not a singleton and, therefore, we need to settle with any feasible graph satisfying the constraints. That is, the identified topology from~\eqref{eq.feasi1} will be a graph within the equivalence class of cospectral graphs regarding $\bm\lambda$ and $\mathcal{S}$. The following definition formalizes the latter.

\begin{mydef}\emph{(Equivalent cospectral graphs)} Two graphs $\mathcal{G}$ and $\mathcal{G}^\prime$ of $N$ nodes are cospectral equivalents
with respect to the spectrum $\bm \lambda$ and the graph representation set $\mathcal{S}$, if they belong to the set
$$
    \mathcal{C}_{\mathcal{S}}^{\bm\lambda} := \{ \mathcal{G}\, |\, \Sb \in \mathcal{S},\,{\rm eig}(\bm S) = \bm\lambda \}.
$$
\end{mydef}

As a result, the feasibility problem~\eqref{eq.feasi1} reduces to a \emph{graph construction} problem (graph inverse eigenvalue problem). Put differently, the problem at hand can be rephrased as \emph{given a spectrum $\bm\lambda$ and a set $\mathcal{S}$, construct a graph shift operator matrix $\bm S \in \mathcal{S}$ with spectrum $\bm\lambda$}. We shall discuss next a method that addresses this construction problem.

\subsection{Graph construction by alternating projections}

Before detailing the graph construction method, we introduce the following assumptions.

\begin{enumerate}
    \item [(A.1)] The set $\mathcal{S}$ is closed.
    \item [(A.2)] For any $\bm S\in\mathbb{R}^{N\times N}$, the projection $P_{\mathcal{S}}(\Sb)$ of $\bm S$ onto the set $\mathcal{S}$ is unique.
\end{enumerate}
The first assumption is technical and guarantees that the set $\mathcal{S}$ includes all its limit points. The second assumption is slightly more restrictive and ensures that the problem
\begin{equation}\label{eq.projS}
    \begin{array}{lll}
         P_{\mathcal{S}}(\Sb) := &\underset{\hat{\bm S}}{{\rm minimize}} & \Vert \Sb - \hat{\Sb}\Vert_{\rm F},\;\text{s.t.}\; \hat{\Sb} \in \mathcal{S},
    \end{array}
\end{equation}
has a unique solution. Although this assumption might seem restrictive, in most cases we only have access to a convex description of the feasible set $\mathcal{S}$ which satisfies A.2 (as the set is assumed closed) or only the projection onto the convex approximation of the feasible set of network matrices can be performed efficiently. Thus, it is fair to consider that A.2 holds in practice.

For the sake of exposition, we restrict our upcoming discussion to the case of symmetric matrices, i.e., undirected graphs but remark that a similar approach can be followed for directed graphs\footnote{This could be done by exchanging the spectral decomposition for the Schur decomposition~\cite{golub2012matrix} which decomposes a matrix into unitary matrices and an upper triangular matrix.}. Further, denote by $\mathcal{S}^{N}$ the set of symmetric $N\times N$ matrices and by $\mathcal{S}_{+}^{N}$ the set of positive semidefinite matrices. We then recall the following result from~\cite[Thm. 5.1]{chu1990projected}.

\def\bigO{{\rm O}}
\def\Qb{\bm Q}
\begin{theorem}[adapted] Given $\bm S \in\mathcal{S}^{N}$ and let $\Sb = \Qb \Lambdab\Qb^T$ be the spectral decomposition of $\Sb$ with non-increasing eigenvalues $[\Lambdab]_{ii} \geq [\Lambdab]_{jj}$ for $i < j$. For a fixed $\Lambdab_o\in\mathcal{D}_N$ with non-increasing elements $[\Lambdab_o]_{ii} \geq [\Lambdab_o]_{jj}$ for $i < j$, a best approximant, in the Frobenius norm sense, of $\Sb$ in the set
$$
    \mathcal{M} := \{ \bm M \in \mathcal{S}^{N}\, |\, \bm M = \bm V \bm\Lambda_o \bm V^T, \bm V \in \bigO(N) \},
$$
is given by
\begin{align*}
    P_{\mathcal{M}}(\Sb) := \Qb\Lambdab_o\Qb^T,
\end{align*}
where $\bigO(N)$ denotes the set of the $N\times N$ orthogonal matrices.
\label{th.spectral} 
\end{theorem}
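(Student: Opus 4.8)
The plan is to turn the constrained best-approximation problem into an optimization over the orthogonal group and then reduce it, via a doubly-stochastic relaxation, to a rearrangement argument. First I would write any candidate in $\mathcal{M}$ as $\bm V\bm\Lambda_o\bm V^T$ with $\bm V\in\bigO(N)$ and expand the squared Frobenius objective
$$\Vert \bm S-\bm V\bm\Lambda_o\bm V^T\Vert_{\rm F}^2 = \Vert \bm S\Vert_{\rm F}^2 - 2\,{\rm tr}(\bm S\bm V\bm\Lambda_o\bm V^T) + \Vert \bm V\bm\Lambda_o\bm V^T\Vert_{\rm F}^2.$$
Because the Frobenius norm is orthogonally invariant, both $\Vert \bm S\Vert_{\rm F}^2=\sum_i[\bm\Lambda]_{ii}^2$ and $\Vert \bm V\bm\Lambda_o\bm V^T\Vert_{\rm F}^2=\sum_i[\bm\Lambda_o]_{ii}^2$ are constants independent of $\bm V$. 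Hence minimizing the objective is equivalent to maximizing the functional $g(\bm V):={\rm tr}(\bm V^T\bm S\bm V\bm\Lambda_o)$ over $\bigO(N)$.

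Next I would substitute the spectral decomposition $\bm S=\bm Q\bm\Lambda\bm Q^T$ and change variables to $\bm W:=\bm V^T\bm Q\in\bigO(N)$, so that $\bm V^T\bm S\bm V=\bm W\bm\Lambda\bm W^T$ and
$$g = {\rm tr}(\bm W\bm\Lambda\bm W^T\bm\Lambda_o) = \sum_{i,j}[\bm\Lambda_o]_{ii}[\bm\Lambda]_{jj}\,[\bm W]_{ij}^2.$$
Writing $\lambda_j:=[\bm\Lambda]_{jj}$ and $\mu_i:=[\bm\Lambda_o]_{ii}$ (both non-increasing), the key observation is that the entrywise-squared matrix $\bm D$ with $[\bm D]_{ij}:=[\bm W]_{ij}^2$ is doubly stochastic whenever $\bm W$ is orthogonal: its entries are nonnegative, and its rows and columns are squared norms of unit vectors, hence sum to one. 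Thus the maximization relaxes to $\max_{\bm D}\sum_{i,j}\mu_i\lambda_j[\bm D]_{ij}$ over the convex set of doubly-stochastic matrices.

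The heart of the argument is then to show this maximum equals $\sum_i\mu_i\lambda_i$ and is attained at $\bm D=\bm I$. Since the objective is linear, by Birkhoff's theorem its maximum over the doubly-stochastic polytope is attained at an extreme point, i.e. at a permutation matrix $\bm P_\sigma$, reducing the problem to $\max_\sigma\sum_i\mu_i\lambda_{\sigma(i)}$. Because $\{\mu_i\}$ and $\{\lambda_j\}$ are both sorted in non-increasing order, the rearrangement inequality guarantees the identity permutation is optimal, giving the value $\sum_i\mu_i\lambda_i$. Tracing the reductions backward, $\bm D=\bm I$ corresponds to $\bm W$ being a diagonal sign matrix (in particular $\bm W=\bm I$ is admissible), i.e. $\bm V=\bm Q$, which yields the best approximant $P_{\mathcal{M}}(\bm S)=\bm Q\bm\Lambda_o\bm Q^T$.

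I expect the main obstacle to be justifying the doubly-stochastic reduction cleanly rather than the bookkeeping of the Frobenius expansion: one must establish the orthogonal-to-doubly-stochastic map and then invoke Birkhoff together with the rearrangement inequality in the correct order. This is precisely the Hoffman--Wielandt / von Neumann trace-inequality mechanism, and the only care needed is that the statement asserts $\bm Q\bm\Lambda_o\bm Q^T$ is \emph{a} best approximant, since uniqueness may fail when $\bm S$ or $\bm\Lambda_o$ has repeated eigenvalues (other optimal $\bm V$ then differ by orthogonal mixing within eigenspaces but produce the same value of $g$).
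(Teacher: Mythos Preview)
Your proof is correct. The paper, however, does not supply its own proof of this statement: it is explicitly labeled ``adapted'' and is introduced with ``We then recall the following result from~[Thm.~5.1, Chu~1990],'' so there is no in-paper argument to compare against. For what it is worth, the route you take---expanding the Frobenius objective, reducing to maximizing ${\rm tr}(\bm W\bm\Lambda\bm W^T\bm\Lambda_o)$ over $\bigO(N)$, passing to the doubly-stochastic matrix of squared entries, and finishing with Birkhoff plus the rearrangement inequality---is exactly the classical Hoffman--Wielandt / von Neumann trace-inequality argument that underlies Chu's result, so your proposal is fully in line with the cited source. Your closing remark on non-uniqueness when eigenvalues have multiplicity larger than one also matches the paper's own discussion immediately after the theorem.
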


Theorem~\ref{th.spectral} implies that the projection of $\L$ onto $\mathcal{M}$ is not necessarily \emph{unique}. As an example, consider the graph shift operator $\bm S$ with repeated eigenvalues. Here, it is not possible to uniquely define a basis for the directions related to the eigenvalues with multiplicity larger than one. Hence, infinitely many eigendecompositions exist that lead to many projections of $\bm S$ onto $\mathcal{M}$. Further, note that as every element of $\mathcal{M}$ is uniquely determined by an element of $\bigO(N)$, the structure of $\mathcal{M}$ is completely defined by the structure of $\bigO(N)$. Therefore, as $\bigO(N)$ is a \emph{smooth manifold}, $\mathcal{M}$ is one as well.

\vskip2mm
\noindent\textbf{Alternating projections method.} As it follows from Assumptions A.1 and A.2 and Theorem~\ref{th.spectral}, we can project any graph shift operator matrix $\bm S \in \mathcal{S}^{N}$ onto $\mathcal{S}$ and $\mathcal{M}$. Further, by noticing that the construction problem~\eqref{eq.feasi1} is equivalent to finding a matrix in
\begin{equation}
    \mathcal{S} \cap \mathcal{M},
    \label{eq.inter}
\end{equation}
we can consider the simple, and intuitive, method of \emph{alternating projections} (AP)~\cite{bauschke1996projection} to find a point in~\eqref{eq.inter}. The AP method finds a point in the intersection of two \emph{closed convex} sets by iteratively projecting a point onto the two sets. It performs the updates
\begin{subequations}\label{eq:APmethod}
    \begin{eqnarray}
        \Sb_{k + 1/2} =& P_{\mathcal{S}}(\Sb_{k})\\
        \Sb_{k + 1} \in& P_{\mathcal{M}}(\Sb_{k+1/2}), 
    \end{eqnarray}        
    \label{eq.it}
\end{subequations}
\hspace{-0.8ex}starting from a point $\Sb_0 \in \mathcal{M}$. The AP method has guaranteed convergence for convex sets and it does that linearly. However, for alternating projections on a combination of different types of sets (we have a set $\mathcal{S}$ satisfying A.1 and A.2, and a smooth manifold $\mathcal{M}$), additional conditions on both sets are necessary to guarantee convergence.

First, let us formalize the notion of a fixed point for the iterative procedure~\eqref{eq.it}.


\begin{mydef}[Fixed point]\label{def.fixPoint} A matrix $\bm S \in \mathcal{S}^{N}$ is a fixed point of the alternating projections procedure in~\eqref{eq.it} if there exists an eigendecomposition of $P_{\mathcal{S}}(\Sb)$,
$$
    P_{\mathcal{S}}(\Sb) = \Qb\Lambdab\Qb^T \in \mathcal{S},
$$
with non-increasing elements $[\Lambdab]_{ii} \geq [\Lambdab_{jj}]$ if $i < j$ such that
$$
    \Sb = \Qb \Lambdab_o \Qb^T \in \mathcal{M}.
$$
\end{mydef}

This definition makes explicit two things. First, it defines $\bm S$ as a fixed point if and only if
\begin{equation}
    \Sb \in P_{\mathcal{M}}(P_\mathcal{S}(\Sb)).
\end{equation}
Second, for the cases where $\Sb$ is a fixed point and $P_{\mathcal{S}}(\Sb)$ has eigenvalues with multiplicity larger than one, progress can be still made towards a feasible solution when $P_{\mathcal{M}}(P_\mathcal{S}(\Sb))\not\in\mathcal{S}\cap\mathcal{M}$. To see the latter, consider the case where an alternative eigendecomposition
\begin{equation}
    P_{\mathcal{S}}(\Sb) = \tilde{\Qb}\Lambdab\tilde{\Qb}^T,
\end{equation}
is available for the fixed point $\Sb$. Assuming that
\begin{equation}
    \tilde{\Sb} = \tilde{\Qb}\Lambdab_o\tilde{\Qb} \neq \Qb \Lambdab_o \Qb^T = \Sb,
\end{equation}
with $\Qb$ the eigenbasis that makes $\Sb$ a fixed point, we can see that the new point $\tilde{\Sb}$ escapes from the fixed point. Further, since successive projections between two closed sets is a nonincreasing function over the iterations~\cite[Thm. 2.3]{orsi2006numerical}, we can prove that $\tilde{\Sb}$ presents a progress towards a feasible solution in $\mathcal{S}\cap\mathcal{M}$.\footnote{
 We considered the notion of fixed point to obtain a feasible set of the system matrices (matrices that realize the system) since, beyond their structure, the most important characteristic is their spectrum (set of eigenvalues).} 
 
With this in place, the following theorem shows that the AP method converges when used for the graph construction problem.


\begin{theorem}\label{th.alter}
Let $\mathcal{S}$ meet Assumptions A.1-A.2 and consider the set $\mathcal{M}$ defined in Theorem~\ref{th.spectral}. Let also $\Sb_0,\Sb_1,\Sb_2,\ldots,$ be a sequence generated by the alternating projections method in~\eqref{eq.it}. Then, there exists a limit point $\Sb$ of this sequence that is a fixed point of~\eqref{eq.it} [cf. \ref{def.fixPoint}] satisfying
$$
    \Vert \Sb - P_{\mathcal{S}}(\Sb) \Vert = \lim_{k\rightarrow\infty}\Vert\Sb_k - P_{\mathcal{S}}(\Sb_k)\Vert.
$$
Further, if the limit is zero, then $\Sb\in\mathcal{S}\cap\mathcal{M}$.
\end{theorem}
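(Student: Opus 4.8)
The plan is to track the residual sequence $d_k := \Vert \Sb_k - P_{\mathcal{S}}(\Sb_k)\Vert = \Vert \Sb_k - \Sb_{k+1/2}\Vert$, which measures the distance of the current iterate to $\mathcal{S}$, and to establish three facts: (i) $\{d_k\}$ is monotonically nonincreasing, hence convergent; (ii) the iterates, all lying in the compact manifold $\mathcal{M}$, admit a convergent subsequence whose limit $\Sb$ attains the limiting residual, yielding the displayed equality in the theorem; and (iii) this limit $\Sb$ is necessarily a best approximant of $P_{\mathcal{S}}(\Sb)$ in $\mathcal{M}$, hence a fixed point in the sense of Definition~\ref{def.fixPoint}. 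The concluding statement then follows immediately: if $\lim_k d_k = 0$ then $\Sb = P_{\mathcal{S}}(\Sb)\in\mathcal{S}$ while $\Sb\in\mathcal{M}$, so $\Sb\in\mathcal{S}\cap\mathcal{M}$.

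For (i), I would chain the two optimality inequalities defining the projections. Since $\Sb_{k+1}\in P_{\mathcal{M}}(\Sb_{k+1/2})$ is closest in $\mathcal{M}$ to $\Sb_{k+1/2}$ while $\Sb_k\in\mathcal{M}$, we get $\Vert \Sb_{k+1}-\Sb_{k+1/2}\Vert \le \Vert \Sb_k - \Sb_{k+1/2}\Vert = d_k$; and since $\Sb_{k+3/2}=P_{\mathcal{S}}(\Sb_{k+1})$ is closest in $\mathcal{S}$ to $\Sb_{k+1}$ while $\Sb_{k+1/2}\in\mathcal{S}$, we get $d_{k+1}=\Vert \Sb_{k+1}-\Sb_{k+3/2}\Vert \le \Vert \Sb_{k+1}-\Sb_{k+1/2}\Vert$. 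Combining gives $d_{k+1}\le d_k$, the nonincreasing property already invoked after Definition~\ref{def.fixPoint} and guaranteed by \cite[Thm. 2.3]{orsi2006numerical}; bounded below by zero, $d_k\downarrow d_\infty$ for some $d_\infty\ge 0$. For the compactness in (ii), note that $\mathcal{M}$ is the image of the compact group $\bigO(N)$ under the continuous map $\bm V \mapsto \bm V\Lambdab_o\bm V^T$, hence compact; since $\Sb_k\in\mathcal{M}$ for every $k$, a subsequence $\Sb_{k_j}\to \Sb\in\mathcal{M}$ exists.

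The crux is step (iii), and the key tool is continuity. Assumptions A.1--A.2 make $P_{\mathcal{S}}$ single-valued on a closed set, which I would show forces continuity of $P_{\mathcal{S}}$: any cluster point of $P_{\mathcal{S}}(\Sb_{k_j})$ lies in $\mathcal{S}$ and realizes ${\rm dist}(\Sb,\mathcal{S})$, so by uniqueness it equals $P_{\mathcal{S}}(\Sb)$. Consequently $\Sb_{k_j+1/2}\to P_{\mathcal{S}}(\Sb)$ and $d_{k_j}\to \Vert \Sb - P_{\mathcal{S}}(\Sb)\Vert = d_\infty$. To upgrade ``$\Sb$ attains $d_\infty$'' into ``$\Sb$ is a best approximant,'' I would squeeze $\Vert \Sb_{k_j+1}-\Sb_{k_j+1/2}\Vert$, which by the chain in (i) is wedged between $d_{k_j+1}$ and $d_{k_j}$ and hence also tends to $d_\infty$; but $\Vert \Sb_{k_j+1}-\Sb_{k_j+1/2}\Vert = {\rm dist}(\Sb_{k_j+1/2},\mathcal{M})$ since $\Sb_{k_j+1}\in P_{\mathcal{M}}(\Sb_{k_j+1/2})$, and ${\rm dist}(\cdot,\mathcal{M})$ is $1$-Lipschitz, so in the limit ${\rm dist}(P_{\mathcal{S}}(\Sb),\mathcal{M}) = d_\infty = \Vert \Sb - P_{\mathcal{S}}(\Sb)\Vert$. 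As $\Sb\in\mathcal{M}$ realizes the distance from $P_{\mathcal{S}}(\Sb)$ to $\mathcal{M}$, we obtain $\Sb\in P_{\mathcal{M}}(P_{\mathcal{S}}(\Sb))$; invoking Theorem~\ref{th.spectral} to write this projection as $\Qb\Lambdab_o\Qb^T$ for an ordered decomposition $P_{\mathcal{S}}(\Sb)=\Qb\Lambdab\Qb^T$ places $\Sb$ exactly in the form required by Definition~\ref{def.fixPoint}. I expect the main obstacle to be precisely this identification of the limit point as a minimizer rather than merely a residual-attaining point: extracting a convergent subsequence only controls $\Sb_{k_j}$, whereas the best-approximant property couples $\Sb_{k_j}$ to the successor $\Sb_{k_j+1}$, so the monotone squeeze on the interleaved residuals, together with Lipschitz continuity of the set distance, is what closes the argument.
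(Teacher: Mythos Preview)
Your proposal is correct and shares the paper's overall architecture: compactness of $\mathcal{M}$ as the continuous image of $\bigO(N)$, monotone decrease of the residuals $d_k$ (the paper cites \cite[Thm.~2.3]{orsi2006numerical} for this), and continuity of $P_{\mathcal{S}}$ under A.1--A.2. The technical route to the fixed-point identification differs, however. The paper lifts the iteration to $\bigO(N)$: it extracts a convergent subsequence of the eigenbases $\Qb_k$ appearing in $P_{\mathcal{S}}(\Sb_k)=\Qb_k\Lambdab_k\Qb_k^T$, so that $\Sb_{k+1}=\Qb_k\Lambdab_o\Qb_k^T$ converges to $\Sb=\Qb\Lambdab_o\Qb^T$ with the fixed-point eigendecomposition already in hand; it then introduces a second limit point $\Sb^*$ of the $\Sb_k$ and argues via uniqueness of $P_{\mathcal{S}}$ that $P_{\mathcal{S}}(\Sb^*)=P_{\mathcal{S}}(\Sb)$. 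You stay entirely in $\mathcal{M}$, work with a single limit point $\Sb$ of $\Sb_{k_j}$, and close the argument by the squeeze $d_{k_j+1}\le \Vert \Sb_{k_j+1}-\Sb_{k_j+1/2}\Vert = {\rm dist}(\Sb_{k_j+1/2},\mathcal{M})\le d_{k_j}$ together with $1$-Lipschitzness of ${\rm dist}(\cdot,\mathcal{M})$. Your route is arguably tidier---it avoids juggling two limit points and uses no spectral structure of $\mathcal{M}$ until the final invocation of Theorem~\ref{th.spectral}---while the paper's lift to $\bigO(N)$ is more constructive, producing the witnessing eigenbasis $\Qb$ explicitly.
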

\begin{proof}
    See Appendix~\ref{ap.th2}\footnote{Appendix in the supplemental material.}.
\end{proof}

The above theorem proves that the AP method retrieves a matrix $\Sb\in\mathcal{M}$ that realizes the original system, i.e., it preserves the underlying system modes and is an approximately \emph{feasible network representation}. Nevertheless, Theorem~\ref{th.alter} does not quantify the rate of convergence of such a method. By particularizing results for super-regular sets~\cite[Thm. 5.17]{lewis2007local}, the following theorem shows that if the problem is feasible, locally, the proposed method converges linearly to a point in~\eqref{eq.inter}.


\begin{theorem}\label{th.th3}
Let the set of all permissible matrices $\mathcal{S}$ [cf. \eqref{eq.feasi1}] be convex and meet Assumptions A.1-A.2. Let also the set $\mathcal{M}$ be defined as in Theorem~\ref{th.spectral}. Denote by $N_{\mathcal{S}}({\Sb})$ the normal cone of the closed set $\mathcal{S}$ at a point $\Sb$ and, similarly, by $N_{\mathcal{M}}({\Sb})$ the normal cone of the set $\mathcal{M}$ at $\bm S$. Further, suppose that $\mathcal{M}$ and $\mathcal{S}$ have a strongly regular intersection at $\bar{\Sb}$, i.e., the constant
$$
   \bar{c} = \max\{\langle u , v \rangle\,:\, u\in N_{\mathcal{M}}(\bar{\Sb})\cap B,\,v \in -N_{\mathcal{S}}(\bar{\bm S})\cap  B\},
$$
is strictly less than one with $B$ being a closed unit Euclidean ball. Then, for any initial point $\Sb_0 \in \mathcal{M}$ close to $\bar{\Sb}$, any sequence generated by the alternating projections method in~\eqref{eq.it} converges to a point in $\mathcal{M}\cap\mathcal{S}$ with $R$-linear rate 
$$
    r \in ({\bar{c}},1).
$$
\end{theorem}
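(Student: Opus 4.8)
The plan is to recognize Theorem~\ref{th.th3} as a direct instance of the local linear convergence theory for alternating projections developed in~\cite[Thm. 5.17]{lewis2007local}, and to verify that the pair $\mathcal{S},\mathcal{M}$ meets its hypotheses. That theory guarantees that if two closed sets have a \emph{strongly regular intersection} at a common point $\bar{\Sb}$ and at least one of them is \emph{super-regular} there, then any alternating-projection sequence initialized sufficiently close to $\bar{\Sb}$ converges R-linearly, with any admissible rate $r\in(\bar{c},1)$, where $\bar{c}$ is built from the limiting normal cones at $\bar{\Sb}$. Thus the whole argument reduces to (i) checking the regularity of each set, and (ii) matching the normal-cone quantities so that the stated hypothesis $\bar{c}<1$ is \emph{exactly} strong regularity. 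Note that feasibility is implicitly assumed through the existence of $\bar{\Sb}\in\mathcal{S}\cap\mathcal{M}$.

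First I would dispatch the regularity of $\mathcal{S}$. Since $\mathcal{S}$ is assumed convex and closed (Assumptions A.1--A.2), it is super-regular at every point and its Fréchet, limiting, and convex normal cones all coincide; in particular $P_{\mathcal{S}}$ is single-valued, consistent with~\eqref{eq.projS}. For $\mathcal{M}$ I would invoke the structural fact established right after Theorem~\ref{th.spectral}: the isospectral set $\mathcal{M}=\{\bm V\bm\Lambda_o\bm V^T:\bm V\in{\rm O}(N)\}$ is a smooth compact manifold, being the image of the smooth manifold ${\rm O}(N)$ under the conjugation map. A $C^2$ manifold is prox-regular, and prox-regularity implies super-regularity; moreover it has a locally single-valued projection near any of its points. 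Hence both sets are super-regular at $\bar{\Sb}$ and, in a neighborhood of $\bar{\Sb}$, the iteration~\eqref{eq.it} is well defined in the single-valued sense required by the cited theorem, irrespective of projection order.

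Second, I would identify the transversality constant. For the convex $\mathcal{S}$ the normal cone in the sense of~\cite{lewis2007local} is the usual convex normal cone, while for the smooth manifold $\mathcal{M}$ the normal cone $N_{\mathcal{M}}(\bar{\Sb})$ is the orthogonal complement of its tangent space; both agree with the limiting normal cones used in the super-regular framework. Consequently the constant $\bar{c}$ assembled from $N_{\mathcal{M}}(\bar{\Sb})\cap B$ and $-N_{\mathcal{S}}(\bar{\Sb})\cap B$ in the statement coincides with the angle-type constant governing strongly regular (transversal) intersection in~\cite[Thm. 5.17]{lewis2007local}, so the hypothesis $\bar{c}<1$ is precisely the required transversality. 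Invoking the theorem then produces a point $\Sb^\star\in\mathcal{M}\cap\mathcal{S}$ together with local R-linear convergence of every sequence generated by~\eqref{eq.it} from $\Sb_0\in\mathcal{M}$ near $\bar{\Sb}$, with rate any $r\in(\bar{c},1)$, as claimed.

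The main obstacle is establishing super-regularity of the \emph{nonconvex} set $\mathcal{M}$ rigorously, since this is the one nontrivial regularity hypothesis; I would argue it through prox-regularity of the smooth isospectral manifold rather than verifying the $\epsilon$-normal definition of super-regularity by hand. A secondary technical point is to ensure the normal-cone notions in the statement match those of~\cite{lewis2007local} and that $P_{\mathcal{M}}$ is locally single-valued at $\bar{\Sb}$; single-valuedness can fail only on the lower-dimensional stratum where eigenvalue multiplicities change, so restricting to a neighborhood of $\bar{\Sb}$ on which the multiplicity pattern of $\bm\Lambda_o$ is constant keeps $\mathcal{M}$ a genuine manifold and removes this difficulty.
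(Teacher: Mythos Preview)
Your proposal is correct and follows essentially the same approach as the paper: verify that $\mathcal{S}$ (convex, closed) and $\mathcal{M}$ (smooth manifold) are both super-regular at $\bar{\Sb}$, then invoke the local linear convergence results of \cite[Thm.~5.15, Thm.~5.17]{lewis2007local} under the stated strongly regular intersection hypothesis. Your write-up is in fact more detailed than the paper's own proof, which dispatches the result in two sentences by citing the same references; the additional care you take in identifying the normal-cone notions and in arguing prox-regularity of $\mathcal{M}$ is appropriate and does not diverge from the paper's route.
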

\begin{proof}
    See Appendix~\ref{ap.th3}.
\end{proof}

These results guarantee that the AP method converges \emph{globally} (at least) to a fixed point in $\mathcal{M}$ and \emph{locally}, i.e., within the neighborhood of the solution (if it exists), to a fixed point in $\mathcal{M}\cap\mathcal{S}$.

\subsection{Inaccurate and partial information}

We now consider the case where the estimated eigenvalues are inexact because of noise or are incomplete because the full eigendecomposition of the system matrices is not feasible. To deal with such cases, we modify the structure of the set $\mathcal{M}$ (in Theorem~\ref{th.spectral}) to reflect the uncertainty and the partial eigendecomposition. The modified set has to be compatible with the structure used in Definition~\ref{def.fixPoint} and Theorems~\ref{th.alter} and \ref{th.th3} to guarantee the convergence of the AP method. Thus, in the sequel, we focus on proving \emph{compactness} for the modified versions of $\mathcal{M}$, which suffices to guarantee convergence of the AP method by the result of Theorem~\ref{th.alter}.

\vskip2mm
\noindent\textbf{Uncertainty in the system matrices.} The following proposition shows that the set $\mathcal{M}_\epsilon$, which allows the estimated eigenvalues to lie within an $\epsilon-$uncertainty ball, is compact.

\begin{proposition}\label{pr.p2}
Let $\Lambdab_o \in \mathcal{D}_N$ with $[\Lambdab_o]_{ii} \geq [\Lambdab_o]_{jj}$ for $i < j$ be fixed. If $0 \leq \epsilon < \infty$ is a fixed scalar accounting for uncertainties on the elements of $\Lambdab_o$, then the set
\begin{align}\label{eq.setep}
    \begin{split}
    \mathcal{M}_{\epsilon} 
    := \{\bm M \in \mathcal{S}^N~|~\bm M = \bm V ( \Lambdab_{o} &+ \Lambdab_{\epsilon})\bm V^T, \,\bm V \in \bigO(N),\,\\
    &\Lambdab_{\epsilon}\in\mathcal{D}_{N}, \Vert \Lambdab_{\epsilon} \Vert_2 \leq \epsilon \}
    \end{split}
\end{align}
is compact.

\end{proposition}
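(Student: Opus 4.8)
The plan is to show that $\mathcal{M}_\epsilon$ is compact by invoking the Heine–Borel theorem for subsets of the finite-dimensional space $\mathcal{S}^N$: it suffices to prove that $\mathcal{M}_\epsilon$ is both \emph{closed} and \emph{bounded}. I would first dispose of boundedness, which is the easier half. Every $\bm M\in\mathcal{M}_\epsilon$ has the form $\bm M = \bm V(\Lambdab_o+\Lambdab_\epsilon)\bm V^T$ with $\bm V$ orthogonal, so the eigenvalues of $\bm M$ are exactly the diagonal entries of $\Lambdab_o+\Lambdab_\epsilon$. Using the unitary invariance of the Frobenius norm together with $\Vert\Lambdab_\epsilon\Vert_2\le\epsilon$, I would bound
\begin{equation}
    \Vert\bm M\Vert_{\rm F} = \Vert\Lambdab_o+\Lambdab_\epsilon\Vert_{\rm F} \le \Vert\Lambdab_o\Vert_{\rm F} + \sqrt{N}\,\Vert\Lambdab_\epsilon\Vert_2 \le \Vert\Lambdab_o\Vert_{\rm F} + \sqrt{N}\,\epsilon,
\end{equation}
which is finite since $\Lambdab_o$ is fixed and $\epsilon<\infty$. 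Hence $\mathcal{M}_\epsilon$ lies in a Frobenius ball of finite radius and is bounded.

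The main obstacle is \textbf{closedness}, because the defining representation involves the auxiliary variables $\bm V$ and $\Lambdab_\epsilon$, and a naive ``limit of products'' argument must be handled with care. I would take a sequence $\{\bm M_k\}\subset\mathcal{M}_\epsilon$ with $\bm M_k\to\bm M$ in Frobenius norm and write $\bm M_k = \bm V_k(\Lambdab_o+\Lambdab_{\epsilon,k})\bm V_k^T$ with $\bm V_k\in\bigO(N)$ and $\Lambdab_{\epsilon,k}\in\mathcal{D}_N$, $\Vert\Lambdab_{\epsilon,k}\Vert_2\le\epsilon$. The key is that both auxiliary sequences live in compact sets: $\bigO(N)$ is compact (it is closed and bounded in $\mathbb{R}^{N\times N}$), and the set of diagonal matrices with $\Vert\Lambdab_\epsilon\Vert_2\le\epsilon$ is a closed bounded box, hence compact. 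By passing to a subsequence I may assume $\bm V_k\to\bm V_\star\in\bigO(N)$ and $\Lambdab_{\epsilon,k}\to\Lambdab_{\epsilon,\star}$ with $\Vert\Lambdab_{\epsilon,\star}\Vert_2\le\epsilon$, the last inequality surviving the limit because the constraint set is closed.

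It then remains to identify the limit. Since matrix multiplication is continuous, I would pass to the limit in the factored expression to obtain
\begin{equation}
    \bm M = \lim_{k\to\infty}\bm V_k(\Lambdab_o+\Lambdab_{\epsilon,k})\bm V_k^T = \bm V_\star(\Lambdab_o+\Lambdab_{\epsilon,\star})\bm V_\star^T,
\end{equation}
which exhibits $\bm M$ in exactly the form required for membership in $\mathcal{M}_\epsilon$, with $\bm V_\star\in\bigO(N)$ and $\Lambdab_{\epsilon,\star}\in\mathcal{D}_N$ satisfying $\Vert\Lambdab_{\epsilon,\star}\Vert_2\le\epsilon$. This establishes $\bm M\in\mathcal{M}_\epsilon$, so $\mathcal{M}_\epsilon$ is closed. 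Combining closedness with the boundedness shown above, the Heine–Borel theorem yields that $\mathcal{M}_\epsilon$ is compact. The crux of the argument is recognizing that although $\mathcal{M}_\epsilon$ is defined through a nonlinear (non-convex) parametrization, the parameters range over compact sets, so sequential compactness of the parameters plus continuity of the reconstruction map $(\bm V,\Lambdab_\epsilon)\mapsto\bm V(\Lambdab_o+\Lambdab_\epsilon)\bm V^T$ transfers compactness to the image $\mathcal{M}_\epsilon$.
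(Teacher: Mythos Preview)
Your proof is correct and rests on exactly the same ingredients as the paper's: the compactness of $\bigO(N)$ and of the diagonal box $\{\Lambdab_\epsilon\in\mathcal{D}_N:\Vert\Lambdab_\epsilon\Vert_2\le\epsilon\}$, together with the continuity of the map $(\bm V,\Lambdab_\epsilon)\mapsto\bm V(\Lambdab_o+\Lambdab_\epsilon)\bm V^T$. The only difference is packaging: the paper invokes ``continuous image of a compact set is compact'' in one line, whereas you unpack the same fact into a Heine--Borel argument (boundedness plus a sequential closedness proof), which is a bit longer but entirely equivalent.
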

\begin{proof}
    See Appendix~\ref{ap.pro2}.
\end{proof}


The following result provides a best approximant of a matrix $\bm S$ in the set $\mathcal{M}_\epsilon$ in the Frobenius norm.

\begin{theorem}\label{th.noise} Given $\bm S \in\mathcal{S}^{N}$ with eigendecomposition $\Sb = \Qb \Lambdab\Qb^T$ and non-increasing eigenvalues $[\Lambdab]_{ii} \geq [\Lambdab]_{jj}$ for $i < j$. For a fixed $\Lambdab_o\in\mathcal{D}_N$ with $[\Lambdab_o]_{ii} \geq [\Lambdab_o]_{jj}$ for $i < j$, a best approximant of $\Sb$ in $\mathcal{M}_{\epsilon}$, in the Frobenius norm sense, is given by
$$
    P_{\mathcal{M}_{\epsilon}}(\Sb) := \Qb(\Lambdab_o + \Lambdab_{\epsilon}^*)\Qb^T,
$$
where
$$
    \Lambdab_{\epsilon}^* := \underset{\Lambdab_{\epsilon}\in\mathcal{D}_{N}}{{\rm argmin}}\; \Vert \Lambdab - \Lambdab_{o} - \Lambdab_\epsilon \Vert_{{\rm F}},\; {\rm s.t}\; \Vert \Lambdab_{\epsilon}\Vert_2 \leq \epsilon.
$$
\end{theorem}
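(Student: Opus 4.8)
The plan is to show that the minimization defining $P_{\mathcal{M}_\epsilon}(\bm S)$ decomposes cleanly into a choice of orthogonal eigenbasis and a choice of eigenvalue perturbation, and that the optimal eigenbasis is exactly the one coming from $\bm S$ itself (as in Theorem~\ref{th.spectral}), while the optimal perturbation is found by the scalar diagonal problem stated in the claim. First I would write out the objective: for any candidate $\bm M = \bm V(\bm\Lambda_o + \bm\Lambda_\epsilon)\bm V^T \in \mathcal{M}_\epsilon$, expand
\begin{equation*}
\Vert \bm S - \bm M\Vert_{\rm F}^2 = \Vert \bm S\Vert_{\rm F}^2 - 2\,{\rm tr}\big(\bm S\,\bm V(\bm\Lambda_o + \bm\Lambda_\epsilon)\bm V^T\big) + \Vert \bm\Lambda_o + \bm\Lambda_\epsilon\Vert_{\rm F}^2,
\end{equation*}
using orthogonal invariance of the Frobenius norm for the last term. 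Since $\bm S = \bm Q\bm\Lambda\bm Q^T$, I would substitute and reduce the cross term to ${\rm tr}(\bm\Lambda\,\bm W(\bm\Lambda_o + \bm\Lambda_\epsilon)\bm W^T)$ where $\bm W = \bm Q^T\bm V$ is again orthogonal. The optimization thus separates: first maximize the cross term over $\bm V \in \bigO(N)$ for fixed $\bm\Lambda_\epsilon$, then optimize over $\bm\Lambda_\epsilon \in \mathcal{D}_N$ with $\Vert\bm\Lambda_\epsilon\Vert_2 \leq \epsilon$.

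The key step is the inner maximization over the orthogonal matrix. For fixed diagonal targets $\bm\Lambda_o + \bm\Lambda_\epsilon$, maximizing ${\rm tr}(\bm\Lambda\,\bm W(\bm\Lambda_o+\bm\Lambda_\epsilon)\bm W^T)$ over $\bm W \in \bigO(N)$ is precisely the trace/rearrangement alignment problem solved in the proof of Theorem~\ref{th.spectral}, whose solution aligns both diagonals in the same (non-increasing) order, i.e.\ $\bm W = \bm I$, giving $\bm V = \bm Q$. The technical content here is that this alignment is optimal \emph{simultaneously} for all admissible $\bm\Lambda_\epsilon$, not just for a single target: because both $\bm\Lambda$ and $\bm\Lambda_o + \bm\Lambda_\epsilon$ are already in non-increasing order whenever $\bm\Lambda_\epsilon$ stays ordered, the same von Neumann / Hardy--Littlewood--P\'olya rearrangement bound applies with the identical optimal permutation. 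I would therefore invoke Theorem~\ref{th.spectral} to fix $\bm V = \bm Q$ regardless of the eventual $\bm\Lambda_\epsilon$, which is what decouples the two optimizations.

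Once $\bm V = \bm Q$ is fixed, the residual becomes $\Vert \bm Q(\bm\Lambda - \bm\Lambda_o - \bm\Lambda_\epsilon)\bm Q^T\Vert_{\rm F}^2 = \Vert \bm\Lambda - \bm\Lambda_o - \bm\Lambda_\epsilon\Vert_{\rm F}^2$, again by orthogonal invariance, so the outer problem is exactly the constrained diagonal least-squares problem defining $\bm\Lambda_\epsilon^*$ in the statement. This is a well-posed projection onto the compact feasible region $\{\bm\Lambda_\epsilon \in \mathcal{D}_N : \Vert\bm\Lambda_\epsilon\Vert_2 \leq \epsilon\}$, whose existence of a minimizer follows from compactness (Proposition~\ref{pr.p2}), so $P_{\mathcal{M}_\epsilon}(\bm S) = \bm Q(\bm\Lambda_o + \bm\Lambda_\epsilon^*)\bm Q^T$ as claimed.

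The main obstacle I anticipate is the subtlety in the previous paragraph: one must verify that the optimal eigenbasis does not depend on the (as yet undetermined) perturbation $\bm\Lambda_\epsilon$, so that the two-stage optimization is genuinely separable rather than coupled. The cleanest way around this is to observe that the constraint $\Vert\bm\Lambda_\epsilon\Vert_2 \leq \epsilon$ is invariant under relabeling of diagonal entries, so one may restrict without loss of generality to ordered $\bm\Lambda_o + \bm\Lambda_\epsilon$; then the rearrangement inequality yields the \emph{same} optimal $\bm V = \bm Q$ uniformly, and the decoupling is justified. A minor care point is that $P_{\mathcal{M}_\epsilon}(\bm S)$ need not be unique when $\bm S$ has repeated eigenvalues (inherited directly from the non-uniqueness already noted after Theorem~\ref{th.spectral}), but this does not affect the characterization of the optimal value or of any best approximant.
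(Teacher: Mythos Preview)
Your proposal is correct and follows essentially the same route as the paper: write $\mathcal{M}_\epsilon$ as a union $\bigcup_{\bm\Lambda_\epsilon}\mathcal{M}(\bm\Lambda_\epsilon)$, apply Theorem~\ref{th.spectral} to the inner minimization over $\bm V\in\bigO(N)$ for each fixed $\bm\Lambda_\epsilon$ (yielding $\bm V=\bm Q$ uniformly), and then reduce the outer problem to $\min_{\bm\Lambda_\epsilon}\Vert\bm\Lambda-\bm\Lambda_o-\bm\Lambda_\epsilon\Vert_{\rm F}$ by orthogonal invariance. If anything, your treatment of the ordering subtlety---arguing that the permutation-invariance of the constraint $\Vert\bm\Lambda_\epsilon\Vert_2\leq\epsilon$ lets one assume $\bm\Lambda_o+\bm\Lambda_\epsilon$ is non-increasing, so Theorem~\ref{th.spectral} applies uniformly---is more explicit than the paper's, which dispatches this point in a single closing sentence.
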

\begin{proof}
    See Appendix~\ref{ap.th4}.
\end{proof}


\begin{mycoro} The nonzero entries of $\Lambdab_\epsilon^*$ are
$$
    [\Lambdab_{\epsilon}^*]_{ii} = {\rm sign}(\gamma_i)\cdot{\rm min}\{\epsilon, |\gamma_i|\},
$$
with $\gamma_i := [\Lambdab]_{ii} - [\Lambdab_{o}]_{ii}$.
\end{mycoro}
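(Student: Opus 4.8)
The plan is to use the separability that the diagonal structure of all the matrices in the optimization of Theorem~\ref{th.noise} confers. Since $\Lambdab_\epsilon^*$ is constrained to lie in $\mathcal{D}_N$, its only possibly nonzero entries are the diagonal ones, so it suffices to characterize $[\Lambdab_\epsilon^*]_{ii}$. First I would expand the objective: because $\Lambdab$, $\Lambdab_o$, and $\Lambdab_\epsilon$ are all diagonal, the matrix $\Lambdab - \Lambdab_o - \Lambdab_\epsilon$ is diagonal, and hence its squared Frobenius norm collapses to a sum over the diagonal, namely $\Vert \Lambdab - \Lambdab_o - \Lambdab_\epsilon\Vert_{\rm F}^2 = \sum_{i=1}^N (\gamma_i - [\Lambdab_\epsilon]_{ii})^2$ with $\gamma_i := [\Lambdab]_{ii} - [\Lambdab_o]_{ii}$.

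Second, I would rewrite the constraint in a decoupled form. For a diagonal matrix the spectral norm equals the largest absolute value among its diagonal entries, so $\Vert \Lambdab_\epsilon\Vert_2 = \max_i |[\Lambdab_\epsilon]_{ii}|$, and the single constraint $\Vert \Lambdab_\epsilon\Vert_2 \leq \epsilon$ is therefore equivalent to the per-entry box constraints $|[\Lambdab_\epsilon]_{ii}| \leq \epsilon$ for every $i \in [N]$. With both the objective additively separable and the feasible region a product of intervals, the problem splits into $N$ independent scalar minimizations of the form $\min_{|x|\leq \epsilon}(\gamma_i - x)^2$.

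Third, each scalar subproblem is the Euclidean projection of $\gamma_i$ onto the interval $[-\epsilon,\epsilon]$: the unconstrained minimizer is $x = \gamma_i$, which is feasible precisely when $|\gamma_i| \leq \epsilon$; otherwise the strictly convex parabola forces the optimum to the nearer endpoint, i.e.\ to $\epsilon$ if $\gamma_i > 0$ and to $-\epsilon$ if $\gamma_i < 0$. Both branches are captured by the clipping rule $[\Lambdab_\epsilon^*]_{ii} = {\rm sign}(\gamma_i)\cdot{\rm min}\{\epsilon,|\gamma_i|\}$, which is the claimed expression; strict convexity of each parabola also gives uniqueness entry by entry.

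There is essentially no hard obstacle here, as the corollary is a direct specialization of Theorem~\ref{th.noise}. The only point warranting explicit justification is the identity $\Vert \Lambdab_\epsilon\Vert_2 = \max_i |[\Lambdab_\epsilon]_{ii}|$ for diagonal $\Lambdab_\epsilon$, since it is what turns a single coupling constraint into $N$ independent box constraints and thereby unlocks the entrywise reduction.
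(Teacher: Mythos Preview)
Your proposal is correct and is precisely the natural argument the paper leaves implicit: the corollary is stated without proof, and your separability-plus-clipping derivation is exactly the intended direct specialization of the optimization in Theorem~\ref{th.noise}. There is nothing to add or correct.
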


\vskip2mm
\noindent\textbf{Partial eigendecomposition.} In physical systems, a discrete model of $N$ degrees of freedom provides accurate information of about $N/3$ of the system natural frequencies~\cite[Ch. 5]{chu2005inverse}. In other cases, the full eigendecomposition of the system matrix is not always possible. We, therefore, provide a projection onto a set that only considers a noisy part of the system matrix spectrum is available.

The following theorem provides the main result.



\begin{theorem}\label{th.th5} Let $\Lambdab_m\in \mathcal{D}_{m}$ with $[\Lambdab_m]_{ii} \geq [\Lambdab_m]_{jj}$ for $i < j$ be fixed. If $ 0 \leq \epsilon < \infty$ is a fixed scalar accounting for uncertainties on the elements of $\Lambdab_{m}$ and $\rho := \max_{\L \in \mathcal{S}} \Vert \L \Vert_2$, then a best approximant, in the Frobenius norm sense, of $\L\in\mathcal{S}^{N}$, with $\Vert \L \Vert_2 \leq \rho$, in the set
\begin{equation}
    \begin{split}
    &\mathcal{M}_{\epsilon}^m := \{ \bm M \in\mathcal{S}^N\, |\, \bm M = \bm V {\rm bdiag}(\Lambdab_{m} + \Lambdab_{\epsilon},\bar{\Lambdab})\bm V^T,\\ &\bm V \in \bigO(N),\,\Lambdab_{\epsilon}\in\mathcal{D}_{m},\, \Vert \Lambdab_{\epsilon} \Vert_2 \leq \epsilon, \bar\Lambdab\in\mathcal{D}_{N-m}, \Vert\bar{\Lambdab}\Vert_2 \leq \rho \}
    \end{split}
\end{equation}
is given by
$$
    P_{\mathcal{M}_{\epsilon}^m}(\Sb) := \Qb{\rm bdiag}(\Lambdab_m + \Lambdab_{\epsilon}^*,\Lambdab_{\bar{\bm\sigma}})\Qb^T.
$$
Here, $\bm\sigma$ denotes the permutation of the subset of $[N]$ that solves the combinatorics problem
\begin{equation}\label{eq.permInd}
    \underset{1\leq [\bm\sigma]_1<\ldots<[\bm\sigma]_m\leq N}{{\rm min}}\sum\limits_{i=1}^{m}([\Lambdab]_{[\bm\sigma]_i[\bm\sigma]_i} - [\Lambdab_m]_{ii})^2,
\end{equation}
where $\Lambdab$ is the diagonal matrix of eigenvalues of $\bm S$, and $\bar{\bm\sigma}$ is the complementary set of $\bm\sigma$. The matrix $\Qb$ is given by the (sorted) eigendecomposition of $\Sb$, i.e.,
$$
    \Sb = \Qb{\rm bdiag}(\Lambdab_{\bm\sigma},\Lambdab_{\bar{\bm \sigma}})\Qb^T,
$$
where $\Lambdab_{\bm \sigma}$ is the permuted version of $\Lambdab$
and
$$
    \Lambdab_{\epsilon}^* := \underset{\Lambdab_{\epsilon}\in\mathcal{D}_{m}}{{\rm argmin}}\; \Vert \Lambdab_{\bm\sigma} - \Lambdab_{m} - \Lambdab_\epsilon \Vert_{{\rm F}},\;{\rm s.t.}\Vert\Lambdab_{\epsilon}\Vert_2\leq \epsilon.
$$
Furthermore, $\mathcal{M}_{\epsilon}^m$ is compact.
\end{theorem}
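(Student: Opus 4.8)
The plan is to prove the two assertions — the closed form of the best approximant and the compactness of $\mathcal{M}_\epsilon^m$ — separately, reusing the machinery behind Theorem~\ref{th.spectral}, Theorem~\ref{th.noise}, and Proposition~\ref{pr.p2}.

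For the projection, I would work with the squared objective $\Vert \bm S - \bm M\Vert_{\rm F}^2$ for a generic $\bm M = \bm V\,{\rm bdiag}(\Lambdab_m + \Lambdab_\epsilon,\bar\Lambdab)\bm V^T \in \mathcal{M}_\epsilon^m$ and split the optimization into an outer choice of the eigenvalue multiset of $\bm M$ and an inner choice of the orthogonal factor $\bm V$. The Hoffman--Wielandt inequality gives, for any symmetric $\bm S,\bm M$, the bound $\Vert \bm S - \bm M\Vert_{\rm F}^2 \geq \sum_i (\lambda_i^{\downarrow}(\bm S) - \lambda_i^{\downarrow}(\bm M))^2$ with equality exactly when the two matrices are simultaneously diagonalizable with matched ordering of eigenvalues. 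Hence the inner minimization is solved by $\bm V = \bm Q$, the (sorted) eigenbasis of $\bm S$, which is precisely the alignment already exploited in Theorem~\ref{th.spectral}, and the problem collapses to matching the sorted spectrum of $\bm S$ against the admissible spectrum of $\bm M$.

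With $\bm V = \bm Q$ fixed, the residual decouples across the two diagonal blocks. The hypotheses $\Vert \bm S\Vert_2 \leq \rho$ and $\Vert \bar\Lambdab\Vert_2 \leq \rho$ imply every eigenvalue of $\bm S$ lies in $[-\rho,\rho]$, so whichever $N-m$ eigenvalues are assigned to the free block can be reproduced exactly by $\bar\Lambdab$ at zero cost; the objective therefore reduces to the target-block mismatch alone. Choosing which $m$ eigenvalues of $\bm S$ to keep for the target block is exactly the combinatorial selection~\eqref{eq.permInd}, whose increasing-index constraint $1 \leq [\bm\sigma]_1 < \cdots < [\bm\sigma]_m \leq N$ is the sorted matching dictated by Hoffman--Wielandt; for a fixed $\bm\sigma$ the admissible perturbation is optimized under $\Vert \Lambdab_\epsilon\Vert_2 \leq \epsilon$ by the box-constrained diagonal problem of Theorem~\ref{th.noise}, giving the clipping formula $[\Lambdab_\epsilon^*]_{ii} = {\rm sign}(\gamma_i)\min\{\epsilon,|\gamma_i|\}$. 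Assembling the aligned eigenbasis $\bm Q$, the optimal selection $\bm\sigma$, the clipped $\Lambdab_\epsilon^*$, and the exactly-matched free block $\Lambdab_{\bar{\bm\sigma}}$ yields the stated $P_{\mathcal{M}_\epsilon^m}(\bm S)$.

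Compactness I would settle exactly as in Proposition~\ref{pr.p2}: exhibit $\mathcal{M}_\epsilon^m$ as the image of the product $\bigO(N) \times \{\Lambdab_\epsilon \in \mathcal{D}_m : \Vert \Lambdab_\epsilon\Vert_2 \leq \epsilon\} \times \{\bar\Lambdab \in \mathcal{D}_{N-m} : \Vert \bar\Lambdab\Vert_2 \leq \rho\}$ under the continuous map $(\bm V,\Lambdab_\epsilon,\bar\Lambdab)\mapsto \bm V\,{\rm bdiag}(\Lambdab_m + \Lambdab_\epsilon,\bar\Lambdab)\bm V^T$. Each factor is compact — $\bigO(N)$ is closed and bounded in $\mathbb{R}^{N\times N}$ and the two diagonal norm-balls are closed and bounded — so their finite product is compact and its continuous image $\mathcal{M}_\epsilon^m$ is compact. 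The step I expect to fight hardest is the middle one: making the eigenvector-alignment argument rigorous in the presence of repeated eigenvalues (the same degeneracy that rendered the projection in Theorem~\ref{th.spectral} non-unique) and verifying that the selection~\eqref{eq.permInd}, which is written without $\Lambdab_\epsilon$, indeed minimizes the true post-clipping residual $\sum_i \max(|\gamma_i|-\epsilon,0)^2$; this follows because both $t \mapsto t^2$ and $t \mapsto \max(|t|-\epsilon,0)^2$ are nondecreasing in $|t|$, so the sorted assignment is simultaneously optimal, but pinning down that equivalence cleanly is the delicate part.
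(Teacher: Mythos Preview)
Your proposal is correct and follows essentially the same route as the paper's sketch. The compactness argument is identical: both you and the paper exhibit $\mathcal{M}_\epsilon^m$ as the continuous image of the compact product $\bigO(N)\times\mathcal{D}_{m,\epsilon}\times\mathcal{D}_{N-m,\rho}$. For the best approximant, the paper invokes the Brockett--Chu construction to reduce to a permutation substructure and then appeals to Theorem~\ref{th.noise}; you carry out the same reduction explicitly via Hoffman--Wielandt (which is precisely the content of Theorem~\ref{th.spectral} and the cited Brockett--Chu results), exploit $\Vert\bm S\Vert_2\leq\rho$ to zero out the free block, and finish with the box-constrained diagonal problem of Theorem~\ref{th.noise}. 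Your write-up is in fact more detailed than the paper's sketch, and you correctly flag the one genuinely delicate point --- that the selection~\eqref{eq.permInd} is stated for the pre-clipping objective $\sum_i\gamma_i^2$ whereas the true residual is $\sum_i\max(|\gamma_i|-\epsilon,0)^2$ --- which the paper does not discuss. Your monotonicity justification for why the two minimizers coincide is the right instinct but would need a sharper rearrangement argument to be airtight; the paper's sketch simply does not address this.
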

\begin{proof}
    See Appendix~\ref{ap.th5}.
\end{proof}

\begin{mycoro}
The optimal permutation $\bm \sigma$ of the indices $[N]$ that solves \eqref{eq.permInd} can be found by solving a minimum-weight bipartite perfect matching problem.
\end{mycoro}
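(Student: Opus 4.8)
The plan is to exhibit \eqref{eq.permInd} as a (restricted) assignment problem and then argue that the restriction is inactive at optimality, so that an off-the-shelf minimum-weight perfect matching solver returns an optimizer of \eqref{eq.permInd}. Throughout I use the convention, consistent with Theorem~\ref{th.th5} and the rest of the paper, that the diagonal entries $[\Lambdab]_{jj}$ of $\bm S$ are arranged non-increasingly in $j$, just as the targets satisfy $[\Lambdab_m]_{ii}\geq[\Lambdab_m]_{jj}$ for $i<j$.

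First I would set up the bipartite graph. Introduce two vertex classes: the $N$ \emph{source} vertices indexed by the eigenvalues $\{[\Lambdab]_{jj}\}_{j=1}^{N}$ of $\bm S$, and the $m$ \emph{target} vertices indexed by $\{[\Lambdab_m]_{ii}\}_{i=1}^{m}$. To each source--target pair $(j,i)$ I assign the edge weight $w_{ij} := ([\Lambdab]_{jj} - [\Lambdab_m]_{ii})^2$. Since $m\leq N$, I balance the two sides by appending $N-m$ auxiliary (dummy) target vertices, each joined to every source by a zero-weight edge; this yields a balanced bipartite graph in which every perfect matching assigns exactly $m$ distinct sources to the real targets, the remaining sources being absorbed at zero cost by the dummies. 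A perfect matching thus encodes an injection $\pi:[m]\hookrightarrow[N]$, and its weight equals $\sum_{i=1}^{m}([\Lambdab]_{\pi(i)\pi(i)} - [\Lambdab_m]_{ii})^2$. The minimum-weight perfect matching therefore solves the \emph{unconstrained} assignment problem obtained from \eqref{eq.permInd} by dropping the monotonicity requirement $[\bm\sigma]_1<\cdots<[\bm\sigma]_m$.

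Second I would show this relaxation is tight, i.e. there always exists a minimizing injection that is strictly increasing and hence realizable as a feasible $\bm\sigma$. This rests on an exchange argument exploiting the joint sorting of $\{[\Lambdab]_{jj}\}$ and $\{[\Lambdab_m]_{ii}\}$ together with the submodularity of the squared-difference cost. Concretely, writing $a_i := [\Lambdab_m]_{ii}$ and $b_j := [\Lambdab]_{jj}$, suppose an optimal $\pi$ contains an inversion, i.e. indices $i<i'$ with $\pi(i)>\pi(i')$. Swapping the two assignments changes the total cost by $(a_i-b_{\pi(i')})^2 + (a_{i'}-b_{\pi(i)})^2 - (a_i-b_{\pi(i)})^2 - (a_{i'}-b_{\pi(i')})^2 = 2(a_i-a_{i'})(b_{\pi(i)}-b_{\pi(i')})$. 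Since the sortings give $a_i-a_{i'}\geq 0$ while $\pi(i)>\pi(i')$ forces $b_{\pi(i)}-b_{\pi(i')}\leq 0$, this increment is non-positive, so removing the inversion never increases the cost; repeating until no inversions remain produces an optimal injection that is monotone, and injectivity then upgrades it to strictly increasing. This monotone optimizer is precisely a feasible $\bm\sigma$ for \eqref{eq.permInd}.

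Finally I would close the loop on values: since \eqref{eq.permInd} is the matching relaxation restricted to increasing injections, its optimum is at least the matching optimum; the exchange argument shows the matching optimum is attained by a feasible increasing $\bm\sigma$, so the two optima coincide, and the matching's minimizer, read off on the real-target side and indexed in increasing source order, is an optimal $\bm\sigma$ (its complement yielding $\bar{\bm\sigma}$). I expect the only delicate point to be the bookkeeping that makes the relaxation an honest \emph{perfect} matching---namely the dummy-target padding and the check that the zero-weight dummy edges never interfere with the real assignment---whereas the analytic heart, the sign of the swap increment $2(a_i-a_{i'})(b_{\pi(i)}-b_{\pi(i')})$, is immediate from the joint sorting and the convexity of the cost. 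Polynomial-time solvability then follows from any standard minimum-weight bipartite matching (e.g. Hungarian) algorithm.
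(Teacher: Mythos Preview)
Your argument is correct. The paper actually states this corollary without proof, so there is no detailed argument to compare against; your write-up supplies exactly the justification the paper omits. The reduction to a balanced bipartite perfect matching via zero-weight dummy targets is the standard device, and your exchange (rearrangement) argument showing that the monotonicity constraint $[\bm\sigma]_1<\cdots<[\bm\sigma]_m$ is inactive at optimality is sound: with both $\{[\Lambdab]_{jj}\}$ and $\{[\Lambdab_m]_{ii}\}$ non-increasing, the swap increment $2(a_i-a_{i'})(b_{\pi(i)}-b_{\pi(i')})$ is indeed non-positive whenever $i<i'$ and $\pi(i)>\pi(i')$, and each swap strictly decreases the number of inversions, so the process terminates. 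Two small remarks: (i) your proof hinges on the assumption that the eigenvalues in $\Lambdab$ are listed non-increasingly; you flag this explicitly and it is consistent with the sorting conventions used in Theorems~\ref{th.spectral} and~\ref{th.noise}, so it is a natural reading of Theorem~\ref{th.th5}, but it is worth stating because without it the monotonicity constraint in \eqref{eq.permInd} would not be vacuous; (ii) the phrase ``convexity of the cost'' in your last paragraph is a slight misattribution---the relevant property is the Monge/submodularity structure of the quadratic cost that you correctly invoked earlier, not convexity per se.
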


The above results in Theorems~\ref{th.noise} and \ref{th.th5} showed that the sets $\mathcal{M}_\epsilon$ and $\mathcal{M}_\epsilon^m$ are compact and provide a best Frobenius-norm approximant for $\bm S$ in each case. Therefore, we can apply the AP method in \eqref{eq.it} to these scenarios using the appropriate modifications. Finally, since the sets $\mathcal{M}_\epsilon$ and $\mathcal{M}_\epsilon^m$ meet the conditions of Theorem~\ref{th.alter}, the convergence results for the AP method extend also to the discussed scenarios.

\section{System Consistency Constraints}\label{sec.consistency}
The set $\mathcal{S}$ [cf. \eqref{eq.feasi1}] plays an important role in the system topology that the AP method identifies. As such, it should be constrained such that the AP method yields a consistent system, i.e., the AP estimated state network should define an equivalent system to the original one. We briefly discuss here two constraints that can be added to $\mathcal{S}$ to enforce system consistency.

Given that the system matrix [cf.~\eqref{eq.relAB}] is a bijective matrix function and by using the same construction as for the shift invariance property in~\eqref{eq.sinv}, we can build the linear system
\begin{equation}\label{eq.const1}
    \begin{bmatrix}
    \bm C_{T}\\
    \bm C_{T}f_{x}^{-1}(\bm A_T)
    \end{bmatrix}\bm T = \begin{bmatrix}
    \bm C \\ \bm C\Lx
    \end{bmatrix}.
\end{equation}
Here, we leverage the invariance of the matrix function to nonsingular transforms, i.e.,
\begin{equation}
    f(\bm A_T) = \bm T f(\bm A)\bm T^{-1},
\end{equation}
where $f(\cdot)$ is a matrix function, hence, can be applied to $\bm A_T$. In this way, we get a linear system that depends on $\Lx$ and enforces the shift invariance condition. Nevertheless, the shift invariance condition does not change the optimality nor the uniqueness of the projection onto $\mathcal{S}$. This is because $\bm T$ is a free optimization variable and does not affect the projection distance [cf.~\eqref{eq.projS}]. 

If other constraints for the transform matrix $\bm T$ are known, they can be included when solving for the projection~\eqref{eq.projS}. These additional constraints will not impact the projection optimality because they do not change the cost function for $P_{\mathcal{S}}$. For instance, consider the constraint that requires symmetry in the input matrix $\bm B$. Then, since $\bm B_T = \bm T\bm B$, $\bm B$ is symmetric, and $\bm B_T \bm T = \bm T \bm B \bm T^T$, we have
\begin{equation}
    \bm B_T\bm T^T = \bm T\bm B_T^T.
\end{equation}
Again, such an additional constraint does not change the projection distance as it only modifies the description of the convex set in which the matrix must be projected.

We can introduce other constraints to the set $\mathcal{S}$ to further restrict the family of feasible network representations. However, these constraints should be analyzed as case-specifc and go beyond the main goal of this work. Next, we corroborate the above theoretical findings with numerical results.

\section{Numerical Results}\label{sec.numerical}
In this section, we present a series of numerical results to illustrate the performance of the proposed methods for different scenarios. We first illustrate how the model and the noise coloring influence the estimation performance of commonly used topology identification methods. Then, we corroborate our theoretical results and, finally, we present results for the topology identification from partially-observed networks.\footnote{The code to generate these numerical results can be found in \url{https://gitlab.com/fruzti/systemid_codes}}


\subsection{Discrete model validation}

In this section, we corroborate the discrete model \eqref{eq.ltia} in finding a graph topology from continuous-time data generated following the model \eqref{eq.lds}. The underlying graphs are two fixed random regular graphs of $N = 50$ nodes with node degree of $d = 3$. The data are generated by a continuous-time solver with system evolution matrix $f_x(\Lx) = -\Lx$ and input matrix $f_u(\Lu) = -(\Lu + \bm I)$. The observable matrix $\bm C$ is set to identity and $\bm D$ is the zero matrix. The input signal is drawn from a standard normal distribution and we set the number of samples to $N^3$ with a sampling time of $\tau = 10^{-3}$. 

We compare the proposed method with the spectral templates techniques in \cite{segarra2017network}. For the latter, the system matrices $\bm A$ and $\bm B$ are first obtained from the continuously sampled data (cf. Section~\ref{sec.model}). Then, the eigenvectors of these matrices are used as spectral templates. These results are shown in Fig.~\ref{fig.model}.

Fig.~\eqref{fig.modelGraph} shows the estimated network topologies for a particular input signal realization, while Fig.~\eqref{fig.modelEig} compares the respective spectra. We observe that the approach relying on spectral templates overestimates the number of edges and underestimate the graph eigenvalues. However, the graph obtained with spectral templates is a \emph{matrix function} of the original graph, i.e., there is a function (polynomial) that maps the estimated graph to the original one. This is because $\Lx$ has all eigenvalues with multiplicity one. The proposed technique relying on the discrete model \eqref{eq.ltia} retrieves the eigenvalues and the graph structure perfectly. This result is not surprising since subspace-based system identification is a consistent estimator for the transition matrix and the proposed method use the knowledge of $f_{s,x}$ while the spectral templates does not. For this scenario, we also considered building the graph from the data covariance matrix, but this technique did not lead to satisfactory results. We attribute this misbehavior to the fact that the covariance matrix is not diagonalizable by the graph modes (i.e., eigenvectors of $\Lx$) due to the prescence of the input.

\begin{figure*}
     \centering
    \begin{subfigure}{.6\textwidth}
    \centering
    \includegraphics[width=\textwidth]{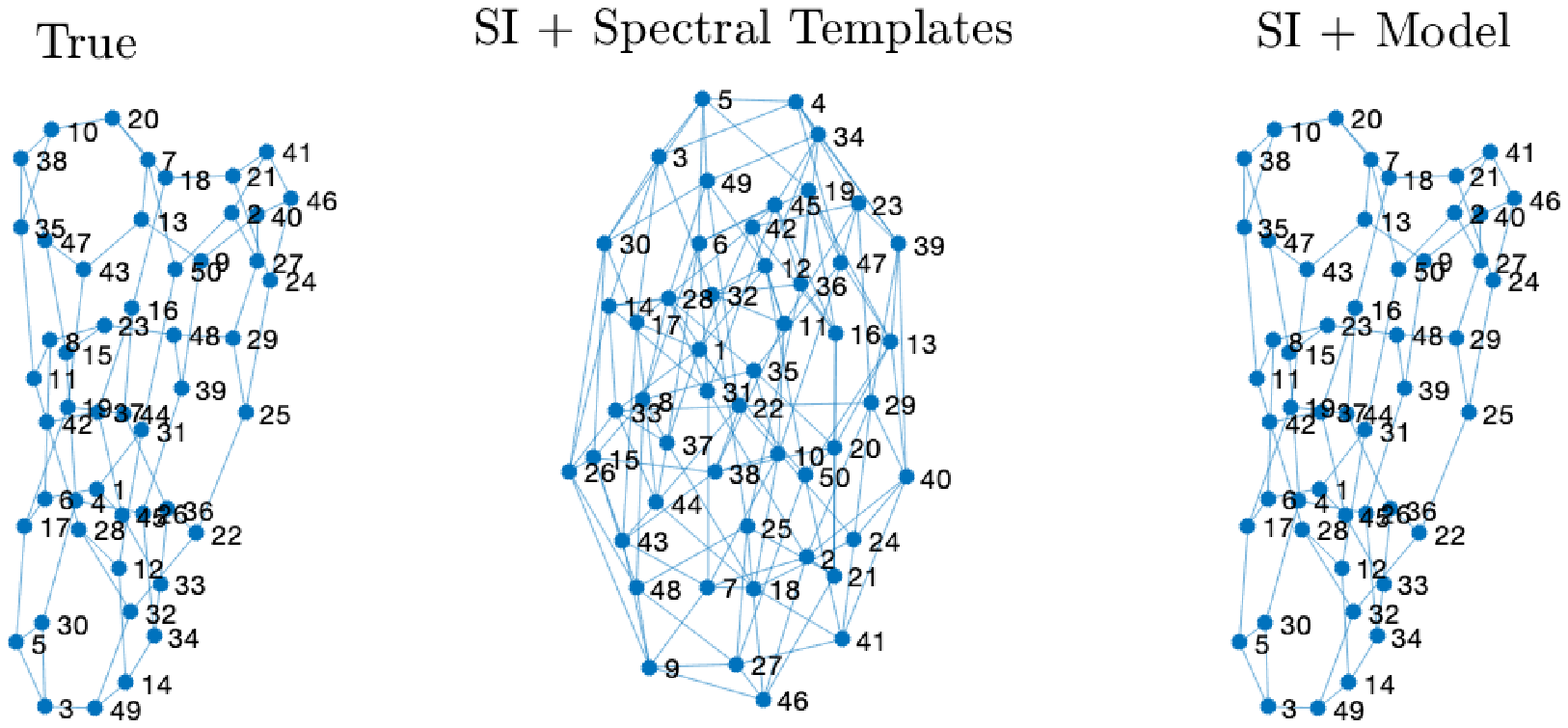}
    \caption{}
    \label{fig.modelGraph}
    \end{subfigure}%
    \begin{subfigure}{.4\textwidth}
    \centering
    \psfrag{lambda}{$\lambda_k$}
    \psfrag{[k]}{$[k]$}
    \psfrag{True}{\tiny True}
    \psfrag{SI + Model}{\tiny SI + Model}
    \psfrag{Si + Spectral Templates}{\tiny SI + Spectral Templates}
    \includegraphics[width=\textwidth]{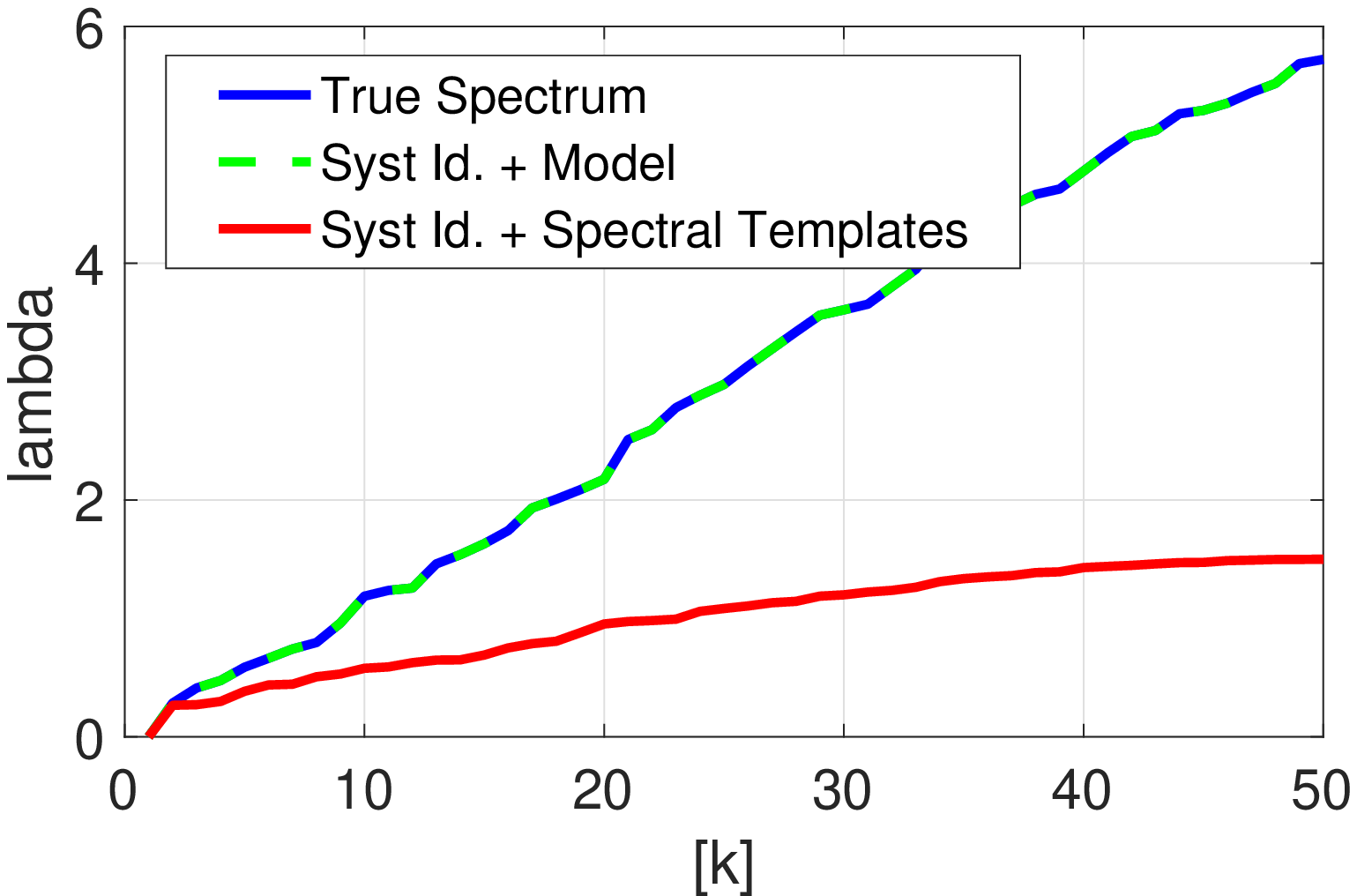}
    \caption{}
    \label{fig.modelEig}
    \end{subfigure}
    \caption{Comparison of  the spectral template method within the system identification framework and the model-aware method. (a) Reconstructed state graph. (b) Comparison of eigenvalues of the estimated graphs.}
    \label{fig.model}
\end{figure*}


\subsection{Instrumental variables approach on social graph}

We now evaluate the instrumental variable approach of Section~\ref{subse:instrum} on the Karate club graph~\cite{zachary1977information}. The graph represents the connections of $N = 34$ members through $78$ undirected edges. We consider the discrete model~\eqref{eq.ltia} with $\bm A$ described by a continuous-time diffusion process $\bm A = e^{-\tau\bm L_{x}}$. The diffusion rate (or sampling time) is fixed to $\tau = 10^{-3}$. The input signal is randomly generated from a standard normal distribution and the power of both the state and the observation noise is $\sigma^2 = 10^{-3}$

We consider three different approaches to estimate the underlying network topology: $i)$ a covariance-based approach, where the covariance matrix is estimated from the observables; $ii)$ the instrumental variable approach combined with the spectral template method from \cite{segarra2017network}; and $iii)$ the proposed instrumental variable approach by enforcing the dynamics of the continuous system. These results are reported in Fig.~\ref{fig.instVar}.

Fig.~\ref{fig.ivVec} shows the fitting accuracy of the subspaces, while Fig.~\ref{fig.ivVal} illustrates the fitting of the eigenvalues. In Fig.~\ref{fig.ivGraphs}, we provide the obtained graphs where the edges with absolute weight less than $10^{-3}$ are omitted. We observe that the system identification flow allows a better graph reconstruction and the proposed method offer the best alignment of the eigenbasis. Further, note that by levering the underlying physical model of the diffusion, we can reconstruct the graph spectrum with high fidelity. We further remark that despite both the basis and the spectrum are aligned, the retrieved graph looks different from the true one. This is because of the ambiguities discussed in Section~\ref{subsec:ambig}. However, the obtained graph has the same eigenvalues as the original one and its basis diagonalizes the original network matrix. Notice also that from the three methods, only the one leveraging the model information retrieves a connected graph after thresholding. 

Finally, we remark that the task of estimating this topology based purely on a spectral decomposition is hard. This is because the combinatorial Laplacian of the Karate club graph has eigenvalues with a multiplicity larger than one. Thus, there is no unique basis for its eigendecomposition leading to difficulties in reconstructing the underlying topology. This issue is also present even when the topology is the sparsest matrix that generates such dynamics.

\begin{figure*}
    \centering
    \begin{subfigure}{.5\textwidth}
    \centering
    \psfrag{Covariance-Based}{\tiny Covariance-Based}
    \psfrag{Spectral Templates}{\tiny IV+Spectral Templates}
    \psfrag{Proposed}{\tiny Proposed}
    \includegraphics[width=\textwidth]{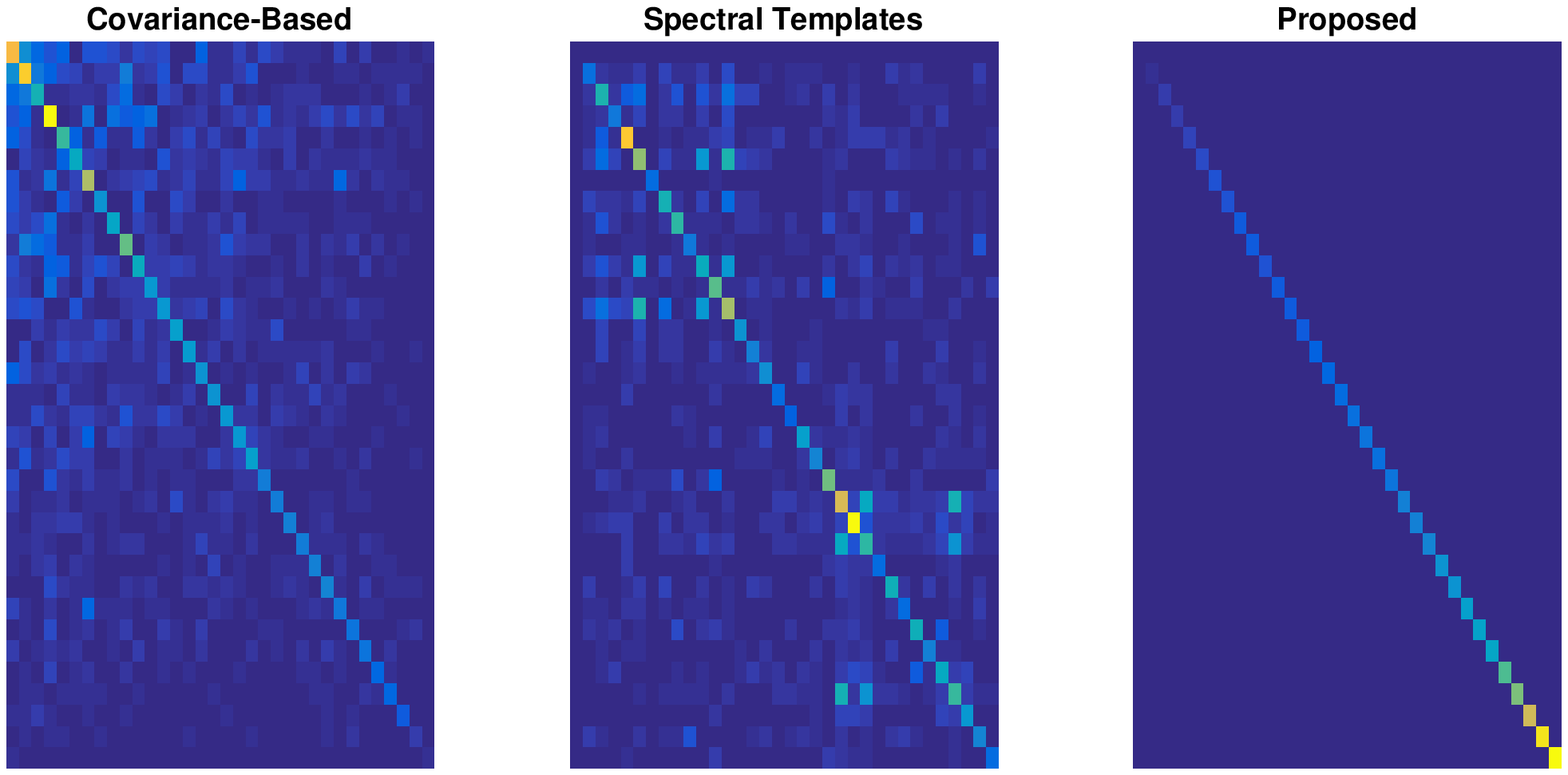}
    \caption{}
    \label{fig.ivVec}
    \end{subfigure}%
    \begin{subfigure}{.5\textwidth}
    \centering
    \psfrag{lambda}{$\lambda_k$}
    \psfrag{[k]}{$[k]$}
    \psfrag{Covariance-Based}{\tiny Covariance-Based}
    \psfrag{Proposed}{\tiny Proposed}
    \psfrag{Spectral Templates}{\tiny IV+Spectral Templates}
    \psfrag{True}{\tiny True}
    \psfrag{Eigenvalues}{Eigenvalues}
    \includegraphics[width=\textwidth]{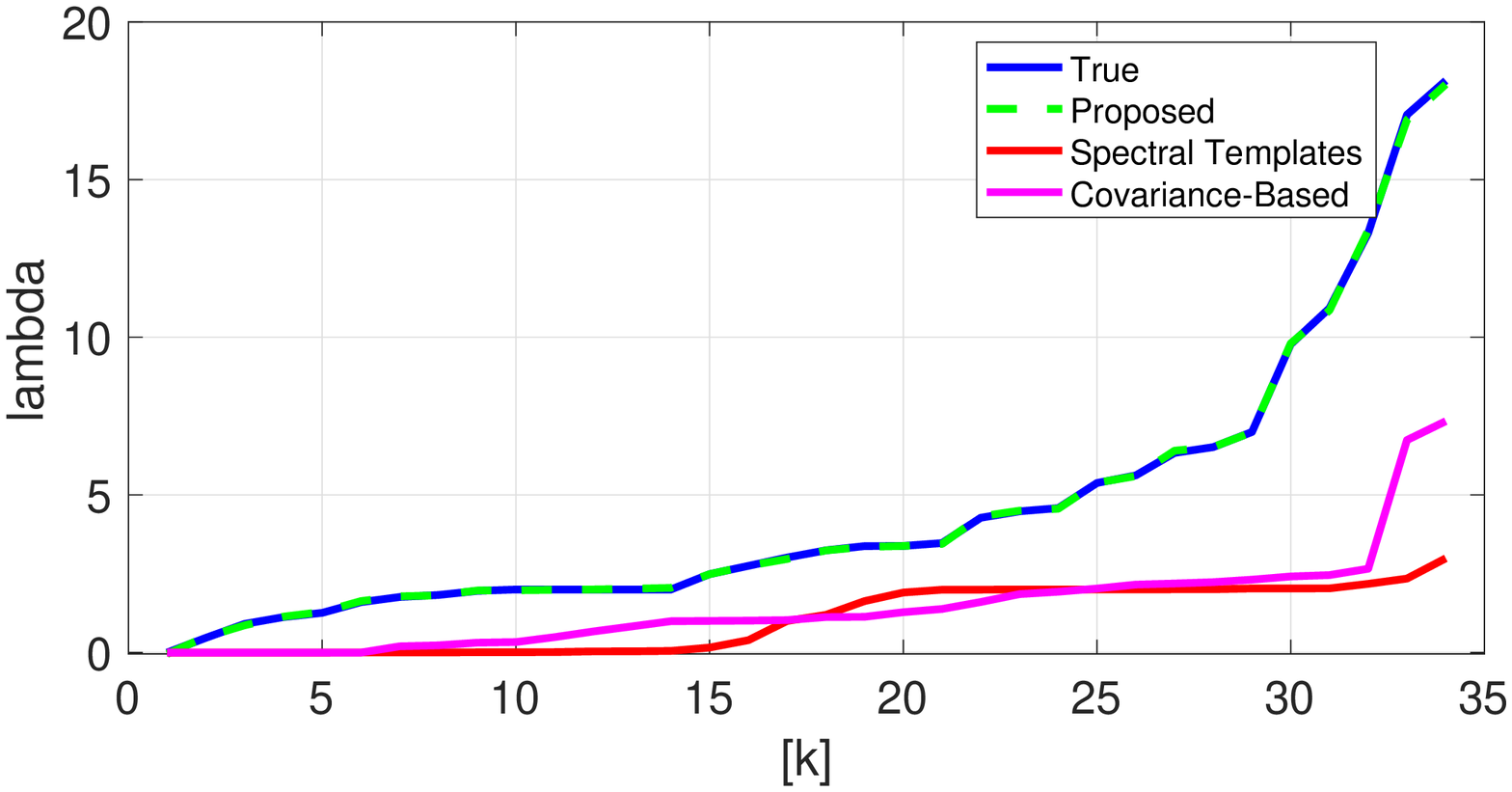}
    \caption{}
    \label{fig.ivVal}
    \end{subfigure}
    \begin{subfigure}{\textwidth}
    \centering
    \includegraphics[width=\textwidth]{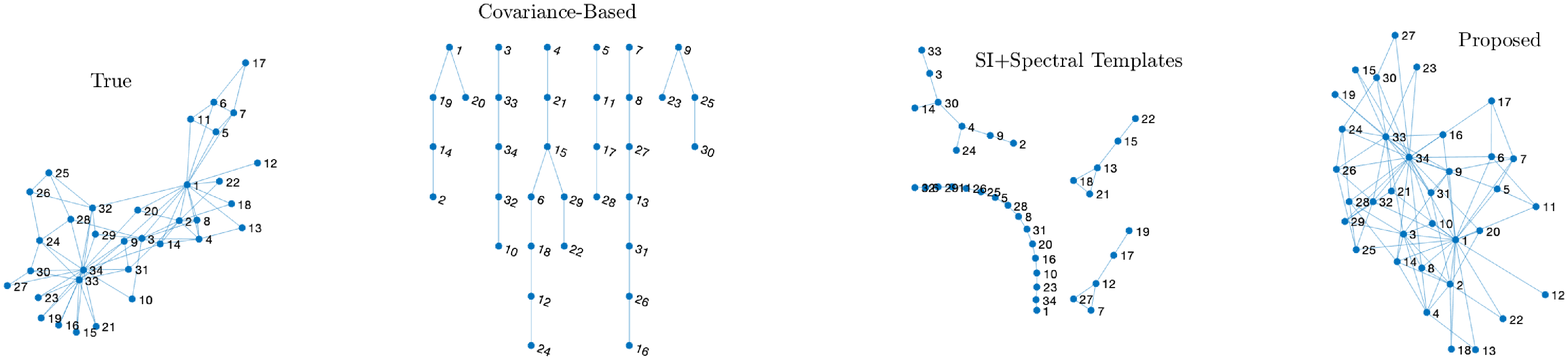}
    \psfrag{Proposed}{\tiny Proposed}
    \psfrag{True}{\tiny True}
    \caption{}
    \label{fig.ivGraphs}
    \end{subfigure}
    \caption{Comparison of several methods using and not using the instrumental variable approach. (a) Comparison of alignment of the eigenbasis of the estimated graphs with the ones of the true graph. (b) Comparison of eigenvalues of estimated graphs. (c) Comparison of estimated topologies.}
    \label{fig.instVar}
\end{figure*}


\subsection{Convergence of the alternating projections method}

We analyze here the convergence behavior of the alternating projections method \eqref{eq:APmethod}. We present results using the sets $\mathcal{M}_{\epsilon}^m$ and $\mathcal{S} = \mathcal{L}_{{\rm CVX}}$. The latter is the convex relaxation of the combinatorial Laplacian set. These sets are chosen to illustrate the convergence results as $\mathcal{L}_{{\rm CVX}}$ encompasses the problem of finding Laplacian matrices with given eigenvalues and $\mathcal{M}_{\epsilon}^m$ is the most general set proposed in this work. Additional results for the other sets are provided in the supplementary material. For this scenario, we select a regular graph\footnote{See supplemental material} with $N=30$ and node degree $d=3$ and consider only half of its eigenvalues known, i.e, $m = N/2$. The AP method is analyzed for five different initial points. 

These results are shown in Fig.~\ref{fig.convResults}. Here, each solid line represents a different starting point. The (blue) dashed line shows the convergence behavior when the starting point is the (diagonal) eigenvalue matrix. These results show two main things. First, the predicted monotone behavior of the error $\Vert \bm S_{k} - P_{\mathcal{S}}(\bm S_k)\Vert_{{\rm F}}$ holds and stagnates when a limit point is reached by the iterative sequence. Second, the error $\Vert \bm S_k - \bm S_{k+1}\Vert_{{\rm F}}$ converges to the desired accuracy (order $10^{-6}$), although not monotonically, The error convergence rate is generally linear and the starting point influences the slope. Finally, we emphasize that even when the set of known eigenvalues lies within an $\epsilon$-ball of uncertainty, the alternating projections method convergences. The convergence is guaranteed by the compactness of the set $\mathcal{M}_{\epsilon}^m$.

\begin{figure*}
    \centering
    \begin{subfigure}{.5\textwidth}
    \centering
    \psfrag{Error}[Bc]{\scriptsize $\;\;\;\;\;\;\;\;\;\;\;\;\;\Vert\bm S_k - P_{\mathcal{L}_{{\rm CVX}}}(\bm S_k)\Vert_{\rm F}$}
    \psfrag{Iteration [k]}[bc]{\scriptsize $\;\;\;\;\;\;\;\;\;\;\;\;\;\;\;\;\;\;\;\text{Iteration}\; [k]$}
    \includegraphics[width=\textwidth]{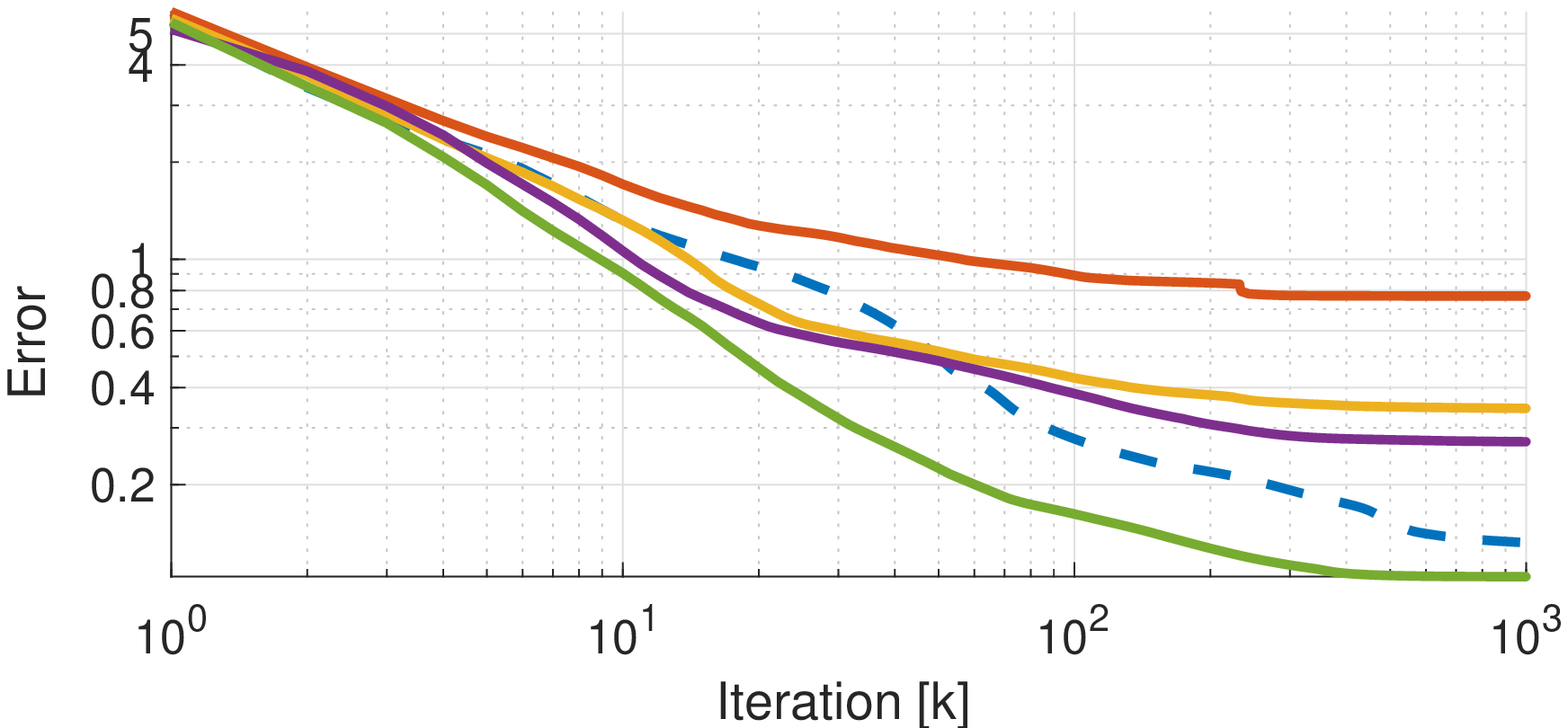}
    \caption{}
    \end{subfigure}%
    \begin{subfigure}{.5\textwidth}
    \centering
    \psfrag{Error}[cb]{\scriptsize $\;\;\;\;\;\Vert\bm S_{k} - \bm S_{k+1}\Vert_{\rm F}$}
    \psfrag{Iteration [k]}[bc]{\scriptsize $\;\;\;\;\;\;\;\;\;\;\;\;\;\;\;\;\;\;\;\text{Iteration}\; [k]$}
    \includegraphics[width=\textwidth]{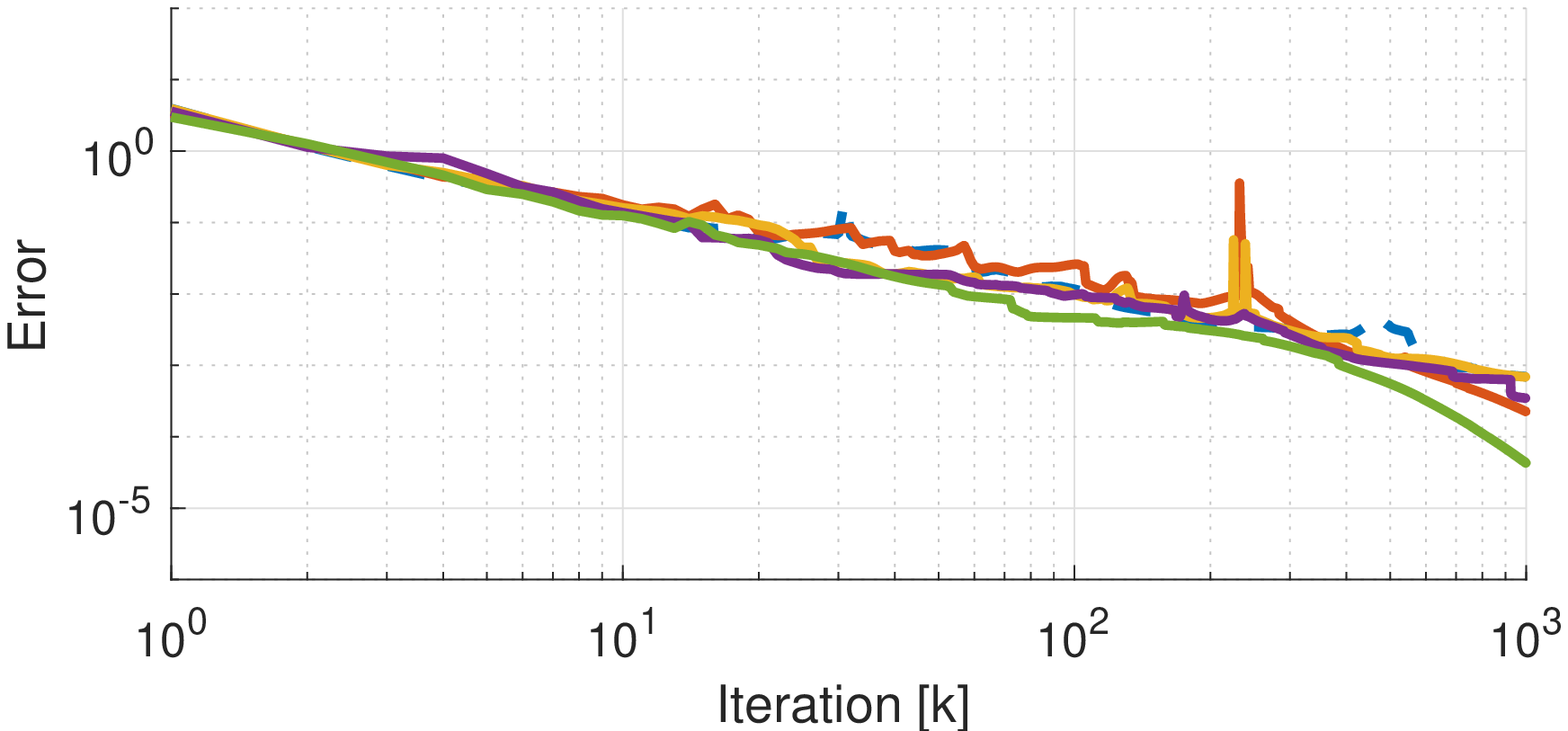}
    \caption{}
    \end{subfigure}
    \caption{Convergence plots for the alternating projections method with $M_{\epsilon}^{m}$ and $\mathcal{L}_{{\rm CVX}}$. (a) Error with respect to the projection, i.e., $\Vert\bm S_k - P_{\mathcal{L}_{\rm CVX}}(\bm S_k)\Vert_{\rm F}$. (b) Iterate error, i.e., $\Vert\bm S_{k} - \bm S_{k+1}\Vert_{\rm F}$}
    \label{fig.convResults}
\end{figure*}


\subsection{Partial observations}

In this section, we consider the task of retrieving a graph that realizes a given system from partial observations. We consider two regular random graphs of $N=14$ nodes and three edges per node. The data are generated from a continuous diffusion on the network and the input matrix is $\bm B = \bm L_{u} + \bm I$. The matrix $\bm C$ is a Boolean matrix that selects half of the nodes (the odd labeled nodes from arbitrary labeling). Note that none of the previous methods can be employed to retrieve the network topology since $\bm B \neq \bm I$ and the network is not fully observed. Even if the covariance matrix is estimated from sampled data, its eigenstructure does not represent the eigenstructure of the state network topology.

We first estimate the system matrices using the system identification framework and then employ the AP method initialized with a random symmetric matrix that has as eigenvalues the estimated state network eigenvalues. The constraint set in \eqref{eq.feasi1} is the convex relaxation of the combinatorial Laplacian set \cite{segarra2017network}. We further enrich this set with the system identification constraints to enforce the feasibility of the realization. Fig.~\ref{fig.numOb} reports the results after $30$ iterations of the AP method.

From Fig.~\ref{fig.nobProj}, we observe that the estimated state graph does not exactly share the eigenbasis with the original one, i.e., the graph mode projections do not form a diagonal matrix. However, we could perfectly match the input graph. This behavior is further seen in the eigenvalues, where those of the input graph are matched by the estimated eigenvalues. For the state graph, a perfect eigenvalue match is possible if a final projection onto $\mathcal{M}$ is performed. These results are also reflected in the estimated topologies in Fig.~\ref{fig.numOb}. Perfect reconstruction of the input graph support is achieved, while the state graph presents a different arrangement in the nodes and it is not regular.

Despite the differences in the state graphs, the triple $\{\hat{\L}_x,\hat{\L}_u,\bm C\}$ realises (approximately) the same system as the triple $\{\Lx,\Lu,\bm C\}$. This is because the product of the involved system matrices is preserved, i.e., although the structure of the state graph is different, the observations can be reproduced with high confidence using the estimated system matrices. With the estimated graphs, we can predict the system output with an NRMSE fitness of $\approx95\%$. In the supplementary material, we compare the response of the true system to an arbitrary excitation with the obtained system response that uses the estimated graphs.

\begin{figure*}
    \centering
    \begin{subfigure}{.5\textwidth}
    \centering
    \psfrag{State Graph}{\scriptsize States Graph}
    \psfrag{Input Graph}{\scriptsize Inputs Graph}
    \includegraphics[width=\textwidth]{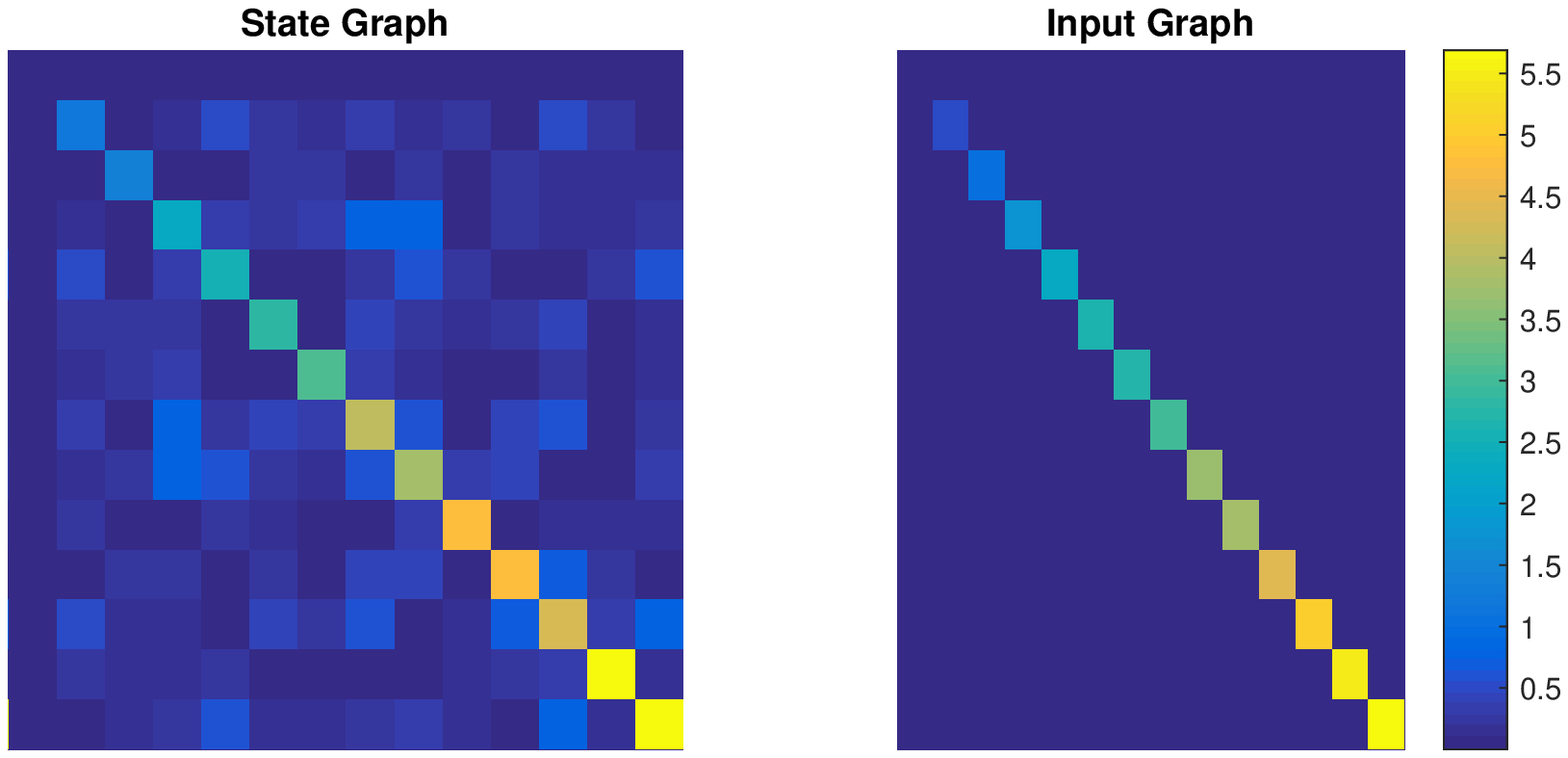}
    \caption{}
    \label{fig.nobProj}
    \end{subfigure}%
    \begin{subfigure}{.5\textwidth}
    \centering
    \psfrag{lambda}[tb]{$\;\;\;\;\;\lambda_k$}
    \psfrag{[k]}{$[k]$}
    \psfrag{Laaaaaa}{\tiny${\rm eig}(\Lx)$}
    \psfrag{Lbaaaaa}{\tiny${\rm eig}(\Lu)$}
    \psfrag{Laestaaaaa}{\tiny${\rm eig}(\hat{\L}_x)$}
    \psfrag{LbEstaaaaa}{\tiny${\rm eig}(\hat{\L}_u)$}
    \includegraphics[width=\textwidth]{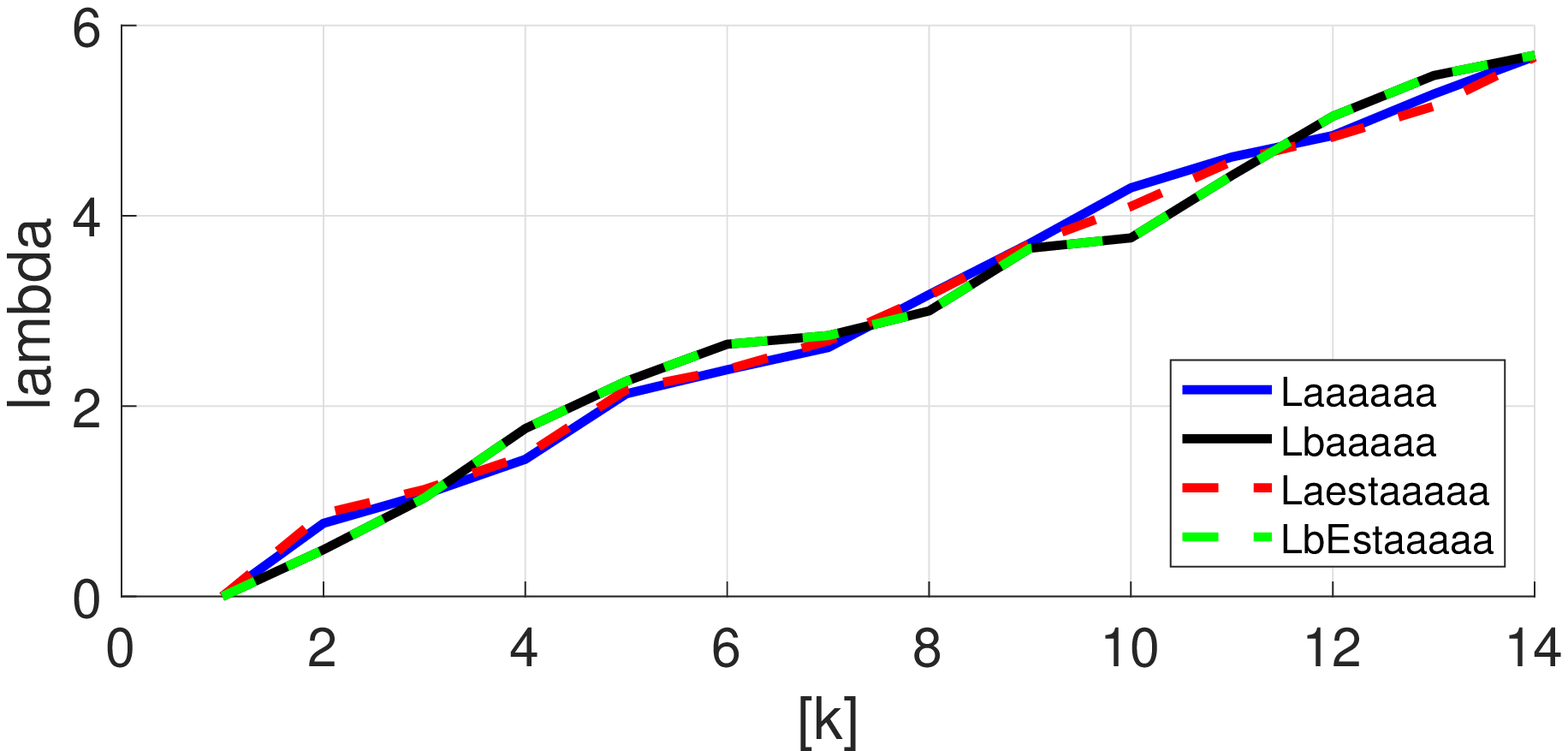}
    \caption{}
    \label{fig.nobEig}
    \end{subfigure}
    \begin{subfigure}{\textwidth}
    \centering
    \includegraphics[width=\textwidth]{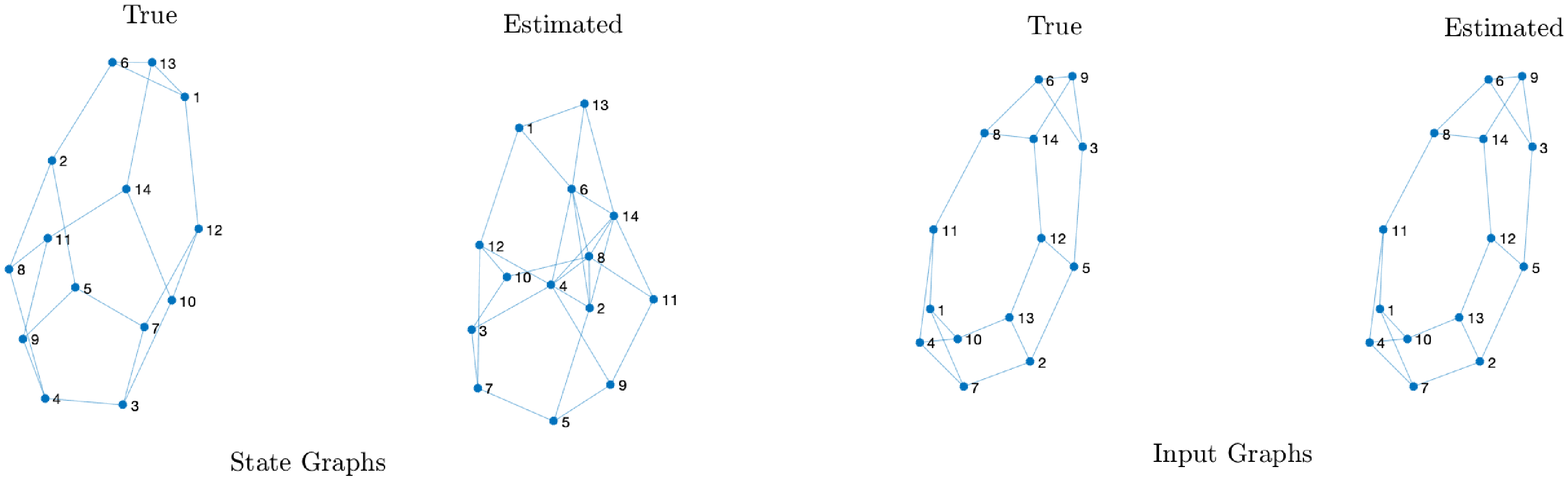}
    \caption{}
    \label{fig.nobGraph}
    \end{subfigure}
    \caption{Results for the reconstruction of a graph from a dynamical system using partial observations and the alternating projections method. (a) Projection results on the modes of the estimated graphs, i.e., $|\hat{\Qb}_*^T\L_*\hat{\Qb}_{*}|$ with $\Qb_*$ being the eigenvectors of the estimated graph. (b) Comparison of the estimated graph eigenvalues. (c) Reconstructed graphs.}
    \label{fig.numOb}
\end{figure*}
\section{Conclusions}\label{sec.conclusions}
This paper introduced a general framework for graph
topology identification through state-space models and subspace techniques.
We showed that it is possible to retrieve the matrix
representation of the involved graphs from the system matrices by exploiting the geometric structure of the input and output data. In particular, we discussed the challenges of retrieving the network topology under partial observations and proposed an alternating projections method to recover a set of matrices that realizes the system and are conform with user-defined constraints. The proposed theoretical analysis is corroborated with numerical results. Future research is needed in three main directions. First, the focus should be on improving the scalability of the proposed techniques to larger graphs. Second, research is needed in employing subspace models to learn a coarser graph that drives the system dynamics in large data sets. Third, extensions of the current approach to nonlinear systems, such as the ones in chemical reaction networks are also worth being investigated.



\bibliographystyle{IEEEtran}
\bibliography{dissertation}

\ifappx

\newpage
\clearpage
\section{Appendix}


\subsection{{Proof Theorem~\ref{th.alter}}}\label{ap.th2}


We construct the proof of the convergence of the method with arguments similar to ~\cite{orsi2006numerical}.

Note that any matrix $\Sb_k \in \mathcal{M}$ can be represented by a matrix $\Qb \in \bigO(N)$. Then, since $\bigO(N)$ is compact, yields the set $\mathcal{M}$ is compact. The latter implies $\mathcal{M}$ contains a convergent sequence and that $\L_k$ has a limit point denoted by $\L^*$. Further, consider the implementation of~\eqref{eq.it}
$$
    P_{\mathcal{S}}(\Qb_k) = \Qb_k \Lambdab_k \Qb_k^T,
$$
and
$$
    \Sb_{k+1} = \Qb_k \Lambdab_o\Qb_k^T.
$$
From the compactness of $\bigO(N)$, $\Qb_k$ converges to a point $\Qb$ that, in turn, makes $\Sb_{k+1}$ converge to $\Sb$.

Further, since the successive projections between $\mathcal{S}$ and $\mathcal{M}$ is a nonincreasing function over the iterations~\cite[Thm. 2.3]{orsi2006numerical}, the limit $\lim_{k\rightarrow\infty}\Vert \Sb_k - P_{\mathcal{S}}(\Sb_k)\Vert$ exists. Then, since $P_{\mathcal{S}}(\cdot)$ is a continuous operation, we have that
\begin{align}\label{eq.equChain}
\begin{split}
    \Vert \Sb^* - P_{\mathcal{S}}(\Sb^*)\Vert &= \lim_{k\rightarrow\infty}\Vert \Sb_{k} - P_{\mathcal{S}}(\Sb_k)\Vert \\
    & = \lim_{k\rightarrow\infty}\Vert \Sb_{k+1} - P_{\mathcal{S}}(\Sb_{k+1})\Vert \\
    & = \Vert \Sb - P_{\mathcal{S}}(\Sb)\Vert.
\end{split}
\end{align}

Then, since $P_{\mathcal{S}}(\Sb_k)$ converges to $P_{\mathcal{S}}(\Sb^*)$ and the orthogonal matrix $\Qb_k$ converges to $\Qb$, it follows that $P_{\mathcal{S}}(\Sb^*) = \Qb\Lambdab^*\Qb^T$ for some $\Lambdab^*$.
Further, the eigendecomposition $\Sb = \Qb\Lambdab_o\Qb^T$ implies that $\Sb = P_{\mathcal{M}}(P_{\mathcal{S}}(\Sb^*))$.

From \eqref{eq.equChain} and since $\Sb$ is the projection of $P_{\mathcal{S}}(\Sb^*)$ onto $\mathcal{M}$, we have
$$
    \Vert \Sb - P_{\mathcal{S}}(\Sb) \Vert \geq \Vert \Sb - P_{\mathcal{S}}(\Sb^*) \Vert.
$$
By considering then that the projection of $\Sb$ onto $\mathcal{S}$ is unique (Assumption A.2), we have $P_{\mathcal{S}}(\Sb^*) = P_{\mathcal{S}}(\Sb)$. Further, since $\Sb$ is a projection of $P_{\mathcal{S}}(\Sb^*)$ onto $\mathcal{M}$, $\Sb$ is a fixed point.

Finally, consider that the limit in \eqref{eq.equChain} is zero. An arbitrary sequence generated by the AP method converges to some point $\tilde{\Sb}\in\mathcal{M}$, which is a limit point. From the inequality
\begin{align*}
 \Vert \tilde{\Sb} - P_{\mathcal{S}}(\Sb_k)  \Vert &= 
 \Vert \tilde{\Sb} - P_{\mathcal{S}}(\Sb_k) + \Sb_k - \Sb_k  \Vert \\ 
 &\leq \Vert \Sb_k - P_{\mathcal{S}}(\Sb_k) \Vert + \Vert \tilde{\Sb} - \Sb_k \Vert \\ 
 &= \Vert \tilde{\Sb} - \Sb_k \Vert,
\end{align*}
and by taking the limit for $k\rightarrow\infty$, we observe that $\tilde{\Sb}$ is also a limit of points in $\mathcal{S}$. As both sets are closed by assumption, the result follows.

\subsection{{Proof Theorem~\ref{th.th3}}}\label{ap.th3}

Consider that $\mathcal{S}$ and $\mathcal{M}$ are a convex set and a smooth manifold, respectively. Therefore, they are both super-regular sets~\cite{lewis2007local}, which implies that $\mathcal{S}$ and $\mathcal{M}$ are super-regular sets at $\bar{\Sb}$. By the results from~\cite[Thm. 5.15]{lewis2007local} and~\cite[Thm. 5.17]{lewis2007local} that guarantee the convergences of alternating projections in super-regular sets (under the same condition at the intersection as in this theorem), the convergence guarantee follows.





\subsection{{Proof Proposition~\ref{pr.p2}}}\label{ap.pro2}
Consider that the set
$$
\mathcal{D}_{N,\epsilon} := \{\Lambdab_{\epsilon}\,|\,\Lambdab_{\epsilon}\in\mathcal{D}_N,\,\Vert \Lambdab_\epsilon \Vert_2 \leq \epsilon\}
$$
is compact. This holds since the set $D_{N,e}$ is equivalent to the set of diagonal matrices whose diagonal entries lie in $[-\epsilon,\epsilon]$. Hence, the compactness.

Consider now the map $\phi: \bigO(N)\times\mathcal{D}_{N,\epsilon}\rightarrow\mathcal{M}_{\epsilon}$ given by
$$
\phi(\bm V,\Lambdab_{\epsilon}) = \bm V (\Lambdab_o + \Lambdab_\epsilon)\bm V^T
$$
which defines the set $\mathcal{M}_{\epsilon}$ and observe that $\bigO(N)$ is compact. The map $\phi$ is continuous and surjective, thus the set $\mathcal{M}_\epsilon$ is compact.





\subsection{{Proof Theorem~\ref{th.noise}}}
\label{ap.th4}

The proof starts by expressing the set $\mathcal{M}_\epsilon$ as
\begin{equation}\label{eq.unionDesc}
    \mathcal{M}_{\epsilon} := \bigcup\limits_{\Lambdab_\epsilon\in\mathcal{D}_{N,\epsilon}} \mathcal{M}(\Lambdab_\epsilon),
\end{equation}
where $\mathcal{M}(\Lambdab_\epsilon)$ is defined similarly to the set $\mathcal{M}$ [cf. Theorem~\ref{th.spectral}] by substituting $\Lambdab_{o}$ with $\Lambdab_o + \Lambdab_\epsilon$. Then, we can expand the projection problem for $\Sb$ as
$$
   \begin{array}{llll}
        \underset{\bm M\in\mathcal{M}_{\epsilon}}{{\rm min}} & \Vert \Sb - \bm M \Vert_{{\rm F}} &
        =\underset{\Lambdab_\epsilon\in\mathcal{D}_{N,\epsilon}}{\rm min}\;\underset{\bm M\in\mathcal{M}(\Lambdab_\epsilon)}{{\rm min}}\Vert \Sb - \bm M \Vert_{{\rm F}}\\
        &&=  \underset{\Lambdab_\epsilon\in\mathcal{D}_{N,\epsilon}}{\rm min}  \Vert \Sb - P_{\mathcal{M}(\Lambdab_\epsilon)}(\Sb) \Vert_{{\rm F}} \\
        &&=  \underset{\Lambdab_\epsilon\in\mathcal{D}_{N,\epsilon}}{\rm argmin}  \Vert \Sb - \bm Q( \Lambdab_o + \Lambdab_\epsilon)\bm Q^T \Vert_{{\rm F}} \\
        && =  \underset{\Lambdab_\epsilon\in\mathcal{D}_{N,\epsilon}}{\rm argmin}  \Vert \Lambdab - \Lambdab_o - \Lambdab_\epsilon \Vert _{{\rm F}}.
   \end{array}
$$
To obtain the first equality, we used the union description of $\mathcal{M}_{\epsilon}$ in \eqref{eq.unionDesc}. The second and third equalities follow from Theorem~\ref{th.spectral}. The last equality follows from the invariance of the Frobenius norm under unitary transforms. Finally, observe that the non-increasing ordering of eigenvalues for both $\Lambdab$ and $\Lambdab_o$ solves the last optimization problem, thus providing the best Frobenius-norm approximant for $\bm S$.

\subsection{Enforcing Positive semidefinitness under noise}

Observe that the set $\mathcal{M}_{\epsilon}$ [cf. Prop.~\ref{pr.p2}]
\begin{align*}
    \begin{split}
    \mathcal{M}_{\epsilon} 
    := \{\bm M \in \mathcal{S}^N~|~\bm M = \bm V ( \Lambdab_{o} &+ \Lambdab_{\epsilon})\bm V^T, \,\bm V \in \bigO(N),\,\\
    &\Lambdab_{\epsilon}\in\mathcal{D}_{N}, \Vert \Lambdab_{\epsilon} \Vert_2 \leq \epsilon \}
    \end{split}
\end{align*}
specifies a neighborhood where the eigenvalues of $\bm M$ must lie. However, if the desired operator $\bm M$ is positive semidefinite (PSD), the constraint in the norm $\Vert \Lambdab_{\epsilon} \Vert_2$ might lead to matrices that are not PSD. The following proposition introduces an alternative set $\mathcal{M}_{\bm a,\bm b}$ that guarantees that $\bm M$ is PSD and shows that this new set is also compact.

\begin{proposition}\label{pr.p3}
Consider the set of $N\times N$ diagonal matrices $\mathcal{D}_{N}$. Let $\Lambdab_{o} \in \mathcal{D}_{N}$ be a fixed diagonal matrix with non-increasing eigenvalues $[\Lambdab_o]_{ii} \geq [\Lambdab_o]_{jj}$ for $i < j$. Let also $ \Lambdab_{\epsilon}\in\mathcal{D}_{N}$ be a diagonal matrix accounting for errors on the elements of $\Lambdab_{o}$ with bounded elements $[\Lambdab_{\epsilon}]_{ii} \in [a_i,b_i]$.

Then, the set
\begin{equation}
    \begin{split}
    \mathcal{M}_{\bm a,\bm b} 
    := \{\bm M \in \mathcal{S}^N~|~ \bm S &= \bm V ( \Lambdab_{o} + \Lambdab_{\epsilon})\bm V^T, \,\bm V \in \bigO(N),\\
    &\; \Lambdab_{\epsilon}\in\mathcal{D}_{N},\, [\Lambdab_{\epsilon}]_{ii} \in [a_i,b_i]\,~\forall i \in [N] \},
    \end{split}
\end{equation}
is compact for vectors $\bm a$ and $\bm b$ such that $[\bm a]_i = a_i$ and $[\bm b]_i = b_i$.
\end{proposition}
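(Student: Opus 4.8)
The plan is to mirror the argument used for Proposition~\ref{pr.p2}, exhibiting $\mathcal{M}_{\bm a,\bm b}$ as the continuous image of a product of compact sets. Concretely, I would first isolate the set of admissible diagonal perturbations
$$\mathcal{D}_{\bm a,\bm b} := \{\Lambdab_{\epsilon}\in\mathcal{D}_N \,|\, [\Lambdab_\epsilon]_{ii}\in[a_i,b_i]\,\forall\, i\in[N]\},$$
and then compose the eigenvalue assignment with the orthogonal conjugation to recover every element of $\mathcal{M}_{\bm a,\bm b}$.

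First I would show that $\mathcal{D}_{\bm a,\bm b}$ is compact. Identifying a diagonal matrix with the vector of its diagonal entries, $\mathcal{D}_{\bm a,\bm b}$ is isometric to the box $\prod_{i=1}^{N}[a_i,b_i]\subset\mathbb{R}^N$. Since each $[a_i,b_i]$ is a closed and bounded interval (the endpoints $a_i,b_i$ being finite reals), this box is closed and bounded, hence compact by the Heine--Borel theorem. Because $\bigO(N)$ is compact, the product $\bigO(N)\times\mathcal{D}_{\bm a,\bm b}$ is compact as well.

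Next I would define the map $\phi:\bigO(N)\times\mathcal{D}_{\bm a,\bm b}\to\mathcal{M}_{\bm a,\bm b}$ by $\phi(\bm V,\Lambdab_\epsilon)=\bm V(\Lambdab_o+\Lambdab_\epsilon)\bm V^T$. By the very definition of $\mathcal{M}_{\bm a,\bm b}$, this map is surjective; moreover it is continuous, since matrix multiplication, addition, and transposition are all continuous operations and $\Lambdab_o$ is fixed. As the continuous image of a compact set is compact, $\mathcal{M}_{\bm a,\bm b}$ is compact, as claimed.

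The argument is essentially routine, and the only point requiring a moment's care is the compactness of $\mathcal{D}_{\bm a,\bm b}$: this relies on the finiteness of the bounds $a_i,b_i$, which guarantees that the admissible range of eigenvalue perturbations is bounded. This is precisely where the present set differs from, yet parallels, the $\ell_2$-ball constraint $\Vert\Lambdab_\epsilon\Vert_2\leq\epsilon$ used in Proposition~\ref{pr.p2}; both yield a compact constraint set on the perturbations, and the remainder of the proof is identical.
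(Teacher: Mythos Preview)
Your proposal is correct and follows essentially the same approach as the paper: the paper also defines the map $\phi:\bigO(N)\times\prod_{i=1}^{N}[a_i,b_i]\to\mathcal{M}_{\bm a,\bm b}$, notes that $\bigO(N)$ and each interval $[a_i,b_i]$ are compact, and concludes by continuity and surjectivity of $\phi$. Your write-up is in fact slightly more explicit (invoking Heine--Borel and spelling out the continuity of the matrix operations), but the argument is the same.
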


\begin{proof}
The proof follows similarly to that of Proposition~\ref{pr.p2} for the set $\mathcal{M}_\epsilon$.

Consider the map
$$
\phi : \bigO(N) \times \prod\limits_{i=1}^{N} [a_i,b_i] \rightarrow \mathcal{M}_{\bm a,\bm b},
$$
which defines the set $\mathcal{M}_{\bm a,\bm b}$ and observe that $\bigO(N)$ and all intervals $[a_i,b_i]$ are compact.

Further, it can be shown that the map $\phi$ is continuous and surjective, thus the set $\mathcal{M}_{\bm a,\bm b}$ is compact. Alternatively, consider that the Cartesian product of compact sets is compact (under the appropriate topology) leading to a compact set $\mathcal{M}_{\bm a,\bm b}$.
\end{proof}

\begin{figure}[!t]
    \centering
    \begin{subfigure}{.25\textwidth}
    \centering
    \psfrag{Error}[cb]{$\Vert\bm S_k - P_{\mathcal{S}_{{\rm NN}}}(\bm S_k)\Vert_{\rm F}$}
    \psfrag{Iteration [k]}[tt]{\scriptsize Iteration $[k]$}
    \includegraphics[width=\textwidth]{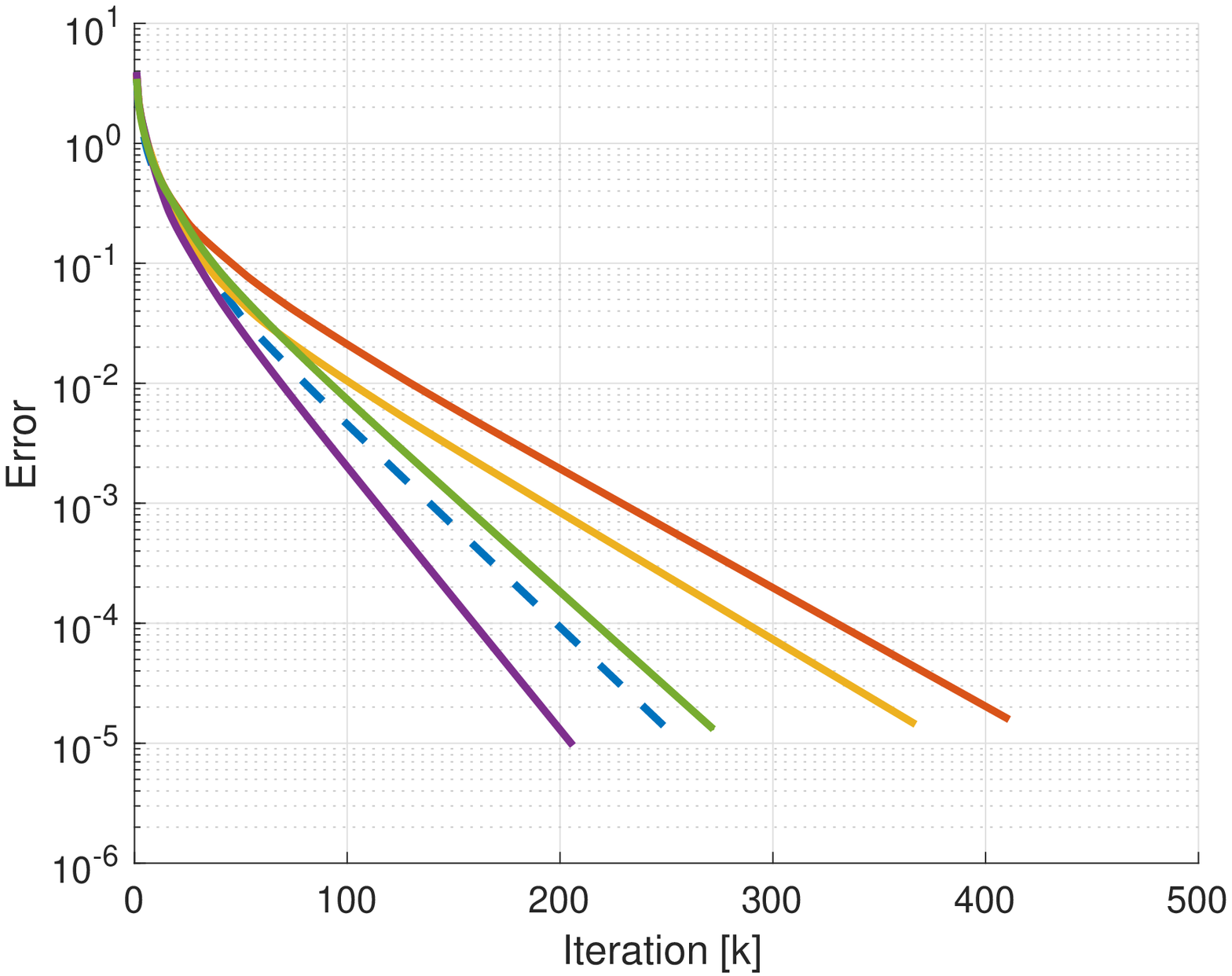}
    \caption{}
    \end{subfigure}%
    \begin{subfigure}{.25\textwidth}
    \centering
    \psfrag{Error}[cb]{$\Vert\bm S_{k} - \bm S_{k+1}\Vert_{\rm F}$}
    \psfrag{Iteration [k]}[tt]{\scriptsize Iteration $[k]$}
    \includegraphics[width=\textwidth]{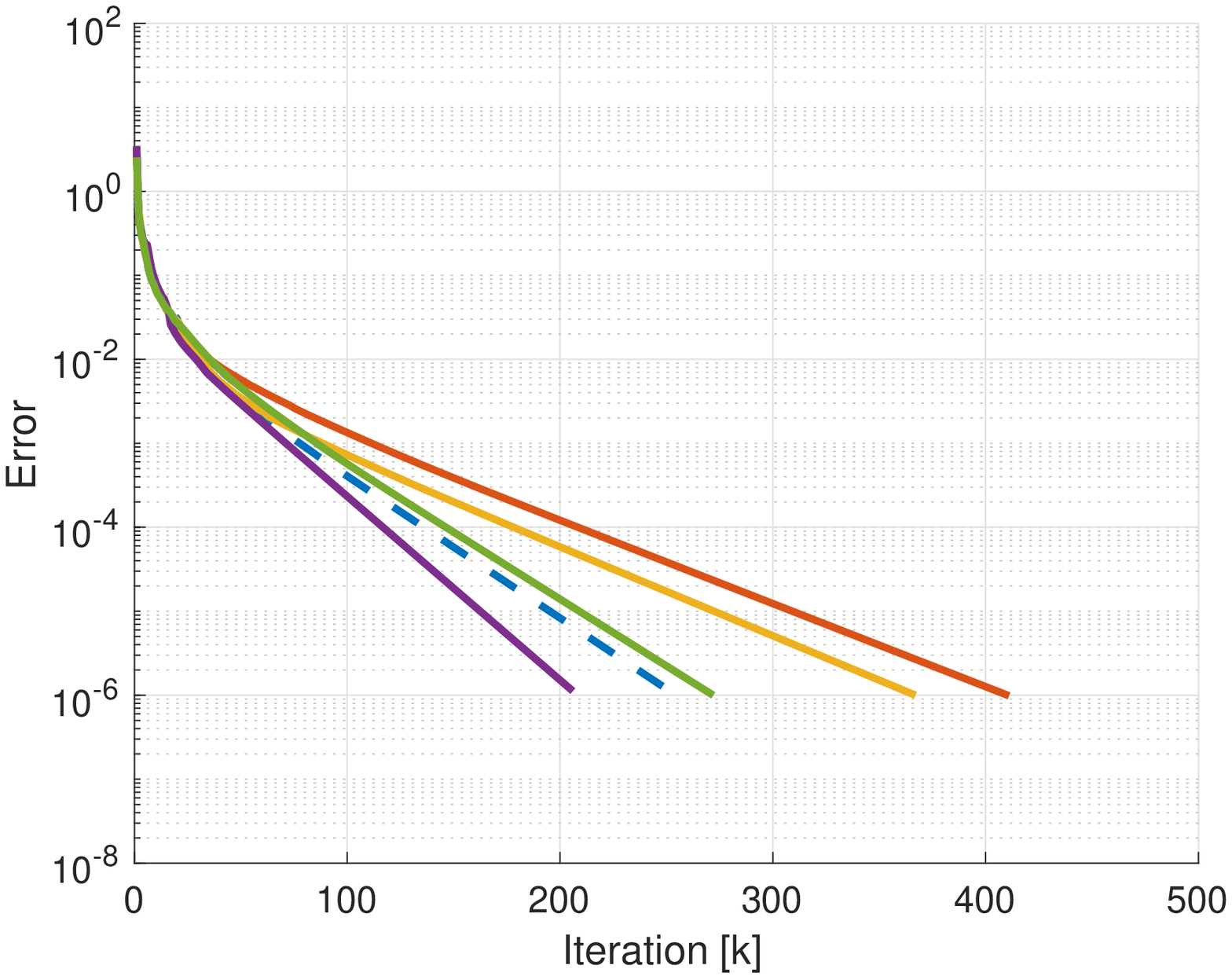}
    \caption{}
    \end{subfigure}
    \caption{Convergence plots for the alternating projections method with $\mathcal{M}$ and $\mathcal{S}_{{\rm NN}}$. (a) Error with respect to the projection, i.e., $\Vert\bm S_k - P_{\mathcal{S}_{{\rm NN}}}(\bm S_k)\Vert_{\rm F}$. (b) Iterate error, i.e., $\Vert\bm S_{k} - \bm S_{k+1}\Vert_{\rm F}$}
    \label{fig.convResults2}
\end{figure}
\subsection{{Proof Theorem~\ref{th.th5}}~(Sketch.)}\label{ap.th5}

We first show that the set $\mathcal{M}_{\epsilon}^m$ is compact. Consider the map
 $$\phi : \bigO(N)\times\mathcal{D}_{m,\epsilon}\times\mathcal{D}_{N-m,\rho}\rightarrow \mathcal{M}_{\epsilon}^m.
 $$
 which defines the set $\mathcal{M}_{\epsilon}^m$ and observe that $\bigO(N)$, $\mathcal{D}_{m,\epsilon}$, and $\mathcal{D}_{N-m,\rho}$ are compact. Then, since the Cartesian product of compact sets is compact (under the appropriate topology), the set $\mathcal{M}_{\epsilon}^m$ is compact.
 
 To prove the optimality of the provided best approximant, we proceed as follows. First, following the construction of~\cite{brockett1991dynamical,chu1992matrix}, it can be shown that for finding the shortest distance from $\Sb$ to $\mathcal{M}_\epsilon^m$, it suffices to find the shortest distance to a particular substructure of $\mathcal{M}_{\epsilon}^m$ defined by the permutation $\bm\sigma$. Then, following the same construction of Theorem~\ref{th.noise}, it can be proven the structure of the optimal $\Lambdab_{\epsilon}^*$.

\subsection{Convergence of Alternating Projections}

In the following, for completeness, we present further results on the convergence of the proposed alternating projections method for the discussed sets in the manuscript. Here, we fix the set $\mathcal{S}$ as the set of nonnegative matrices, $\mathcal{S}_{\rm NN}$. The selection of this feasible set is because this set encompasses the problem of finding stochastic matrices with fixed eigenvalues. 

Similar results as the ones shown for the case of set $\mathcal{M}_{\epsilon}^m$ can be observed. The error with respect to the projection is always monotonic non-increasing, while the iterate error is not necessarily guaranteed to be monotonic. Despite this, in both cases (sets $\mathcal{M}$ and $\mathcal{M}_\epsilon$) the method converges to the desired accuracy of the iterates error.

For these experiments, the employed graph is a $3$-regular graph with $30$ nodes as shown in Fig.~\ref{fig.regGraph}.
\begin{figure}[!t]
    \centering
    \begin{subfigure}{.25\textwidth}
    \centering
    \psfrag{Error}[cb]{$\Vert\bm S_k - P_{\mathcal{S}_{{\rm NN}}}(\bm S_k)\Vert_{\rm F}$}
    \psfrag{Iteration [k]}[tt]{\scriptsize Iteration $[k]$}
    \includegraphics[width=\textwidth]{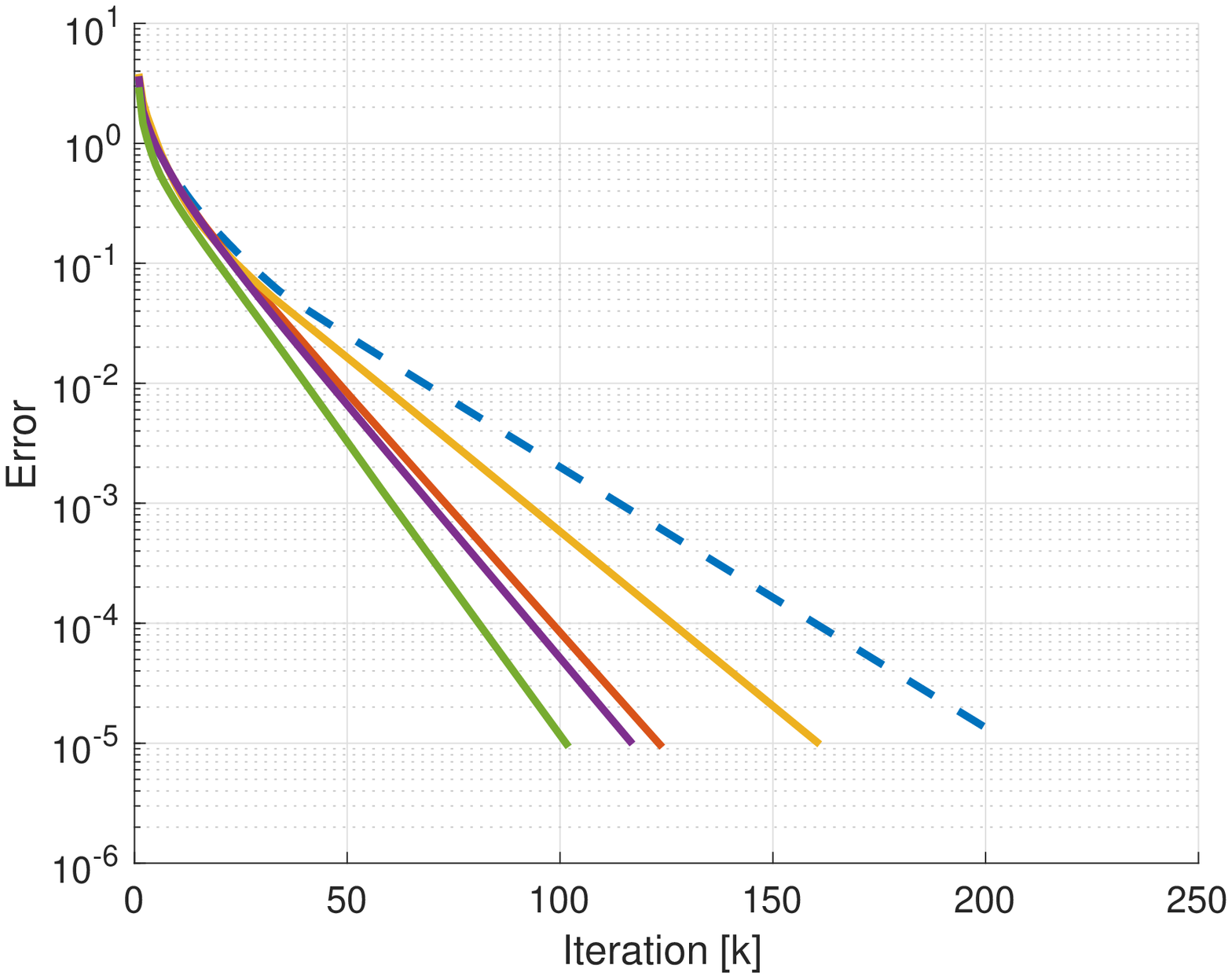}
    \caption{}
    \end{subfigure}%
    \begin{subfigure}{.25\textwidth}
    \centering
    \psfrag{Error}[cb]{$\Vert\bm S_{k} - \bm S_{k+1}\Vert_{\rm F}$}
    \psfrag{Iteration [k]}[tt]{\scriptsize Iteration $[k]$}
    \includegraphics[width=\textwidth]{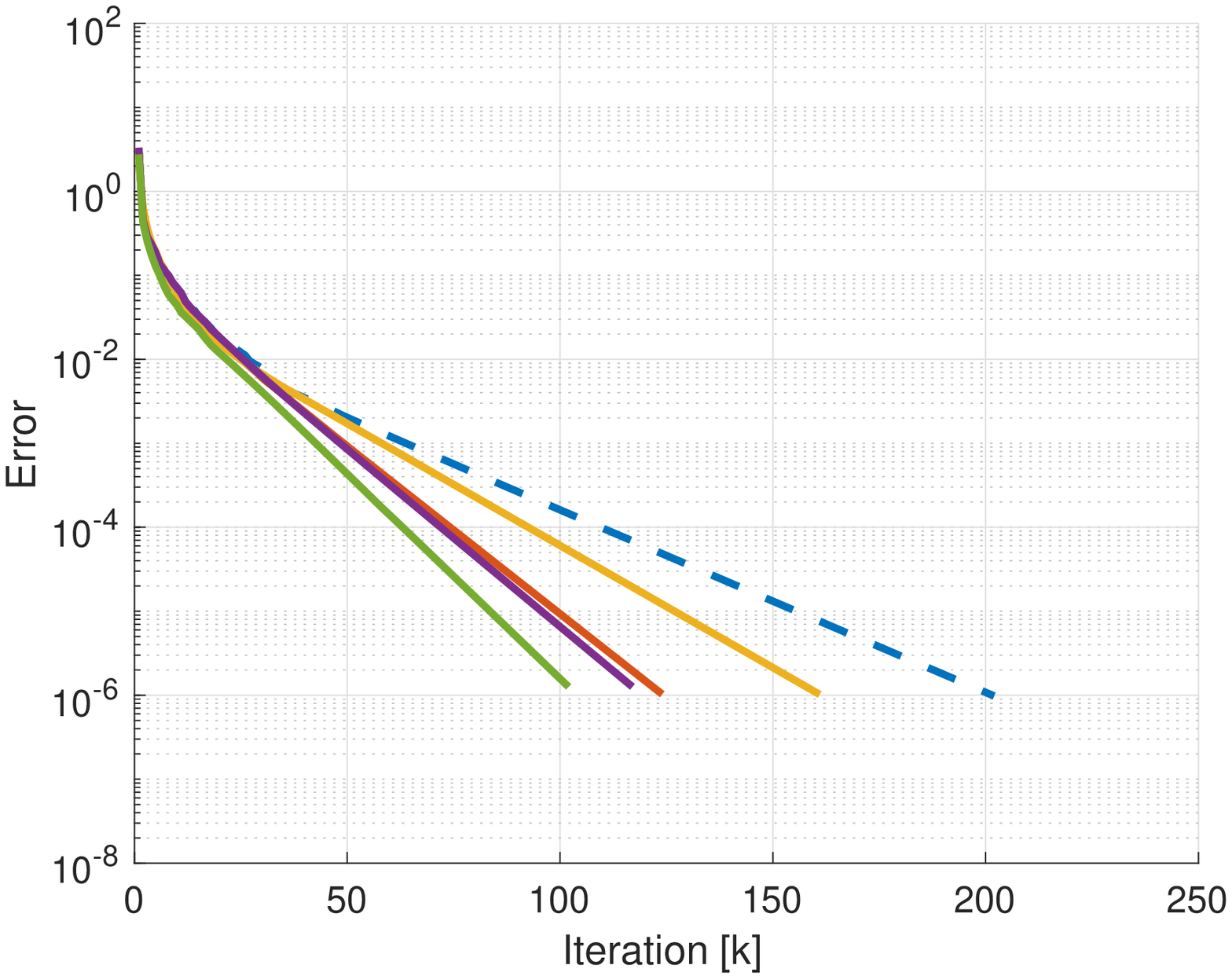}
    \caption{}
    \end{subfigure}
    \caption{Convergence plots for the alternating projections method with $\mathcal{M}_\epsilon$, with $\epsilon = 10^{-1}$, and $\mathcal{S}_{{\rm NN}}$. (a) Error with respect to the projection, i.e., $\Vert\bm S_k - P_{\mathcal{S}_{{\rm NN}}}(\bm S_k)\Vert_{\rm F}$. (b) Iterate error, i.e., $\Vert\bm S_{k} - \bm S_{k+1}\Vert_{\rm F}$}
    \label{fig.convResults3}
\end{figure}

\begin{figure}
    \centering
    \includegraphics[width=0.5\textwidth]{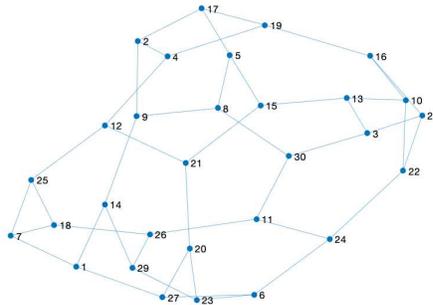}
    \caption{Graph employed for the numerical convergence tests.}
    \label{fig.regGraph}
\end{figure}

\subsection{Response comparison for partially observed network}


Here, a comparison of the predicted output associated with the estimated network structures when compared to the true output of the system for an arbitrary excitation. As seen in Fig.~\ref{fig.estResp}, an approximate $95\%$ of NRMSE is obtained by using the estimated network structures. This s one of the advantages of using the proposed framework. In instances where the problem can be represented by a state-space formulation, finding a graph that satisfies the spectral characteristic of the identified system, within the set of feasible matrix network representations, allows for \emph{consistent} network topology identification.
\begin{figure*}[!t]
    \centering
    \psfrag{Simulated Response Comparison}{\scriptsize Response Comparison}
    \psfrag{simData (y1)}{\tiny Data $[\bm y]_1$}
    \psfrag{simData (y2)}{\tiny Data $[\bm y]_2$}
    \psfrag{simData (y3)}{\tiny Data $[\bm y]_3$}
    \psfrag{simData (y4)}{\tiny Data $[\bm y]_4$}
    \psfrag{simData (y5)}{\tiny Data $[\bm y]_5$}
    \psfrag{simData (y6)}{\tiny Data $[\bm y]_6$}
    \psfrag{simData (y7)}{\tiny Data $[\bm y]_7$}
    \psfrag{Amplitude}{\scriptsize Signal}
    \psfrag{Time (seconds)}[tb]{\scriptsize Time $[t]$}
    \includegraphics[width=\textwidth]{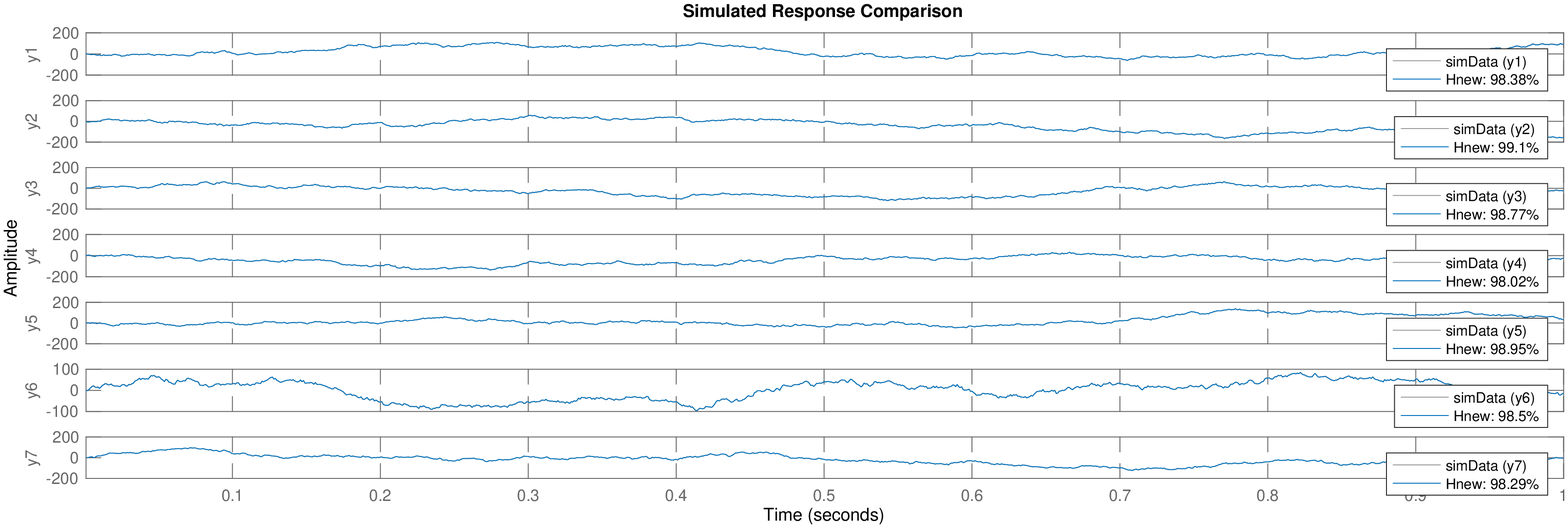}
    \caption{Comparison of the response of the true system to an arbitrary input with the response of the estimated system using the estimated network structures. Here $\H_{\rm new}$ makes reference to the estimated system.}
    \label{fig.estResp}
\end{figure*}
\fi

\end{document}